\newcommand{\defeq}{\triangleq}
\newcommand{\hmin}{H_{\rm{min} } }
\newcommand{\pguess}{P_{\rm{guess} } }
\newcommand{\set}[1]{\mathcal{#1}}
\newcommand{\op}[1]{\mathsf{#1}}
\newcommand{\id}{\operatorname{id}}
\newcommand{\ket}[1]{\left|#1\right\rangle}
\newcommand{\bra}[1]{\left\langle #1\right|}
\newcommand{\flag}[1]{\ket{#1}\!\!\bra{#1}}
\newcommand{\Tr}[1]{{\operatorname{Tr}}\!\left[{#1}\right]}
\newcommand{\PTr}[2]{\operatorname{Tr}_{#1}\!\left[{#2}\right]}
\newcommand{\N}[1]{\left|\!\left|{#1}\right|\!\right|}
\renewcommand{\vec}[1]{\pmb{\mathrm{#1}}}
\newcommand{\sH}{\set{H}}
\begin{document}

\title*{Reverse Data-Processing Theorems\\and Computational Second Laws}
\author{Francesco Buscemi}
\institute{Francesco Buscemi \at Nagoya University, 464-8601 Nagoya Japan, \email{buscemi@is.nagoya-u.ac.jp}}
%
%
\maketitle

\abstract{Drawing on an analogy with the second law of thermodynamics for adiabatically isolated systems, Cover argued that data-processing inequalities may be seen as second laws for ``computationally isolated systems,'' namely, systems evolving without an external memory. Here we develop Cover's idea in two ways: on the one hand, we clarify its meaning and formulate it in a general framework able to describe both classical and quantum systems. On the other hand, we prove that also the reverse holds: the validity of data-processing inequalities is not only necessary, but also sufficient to conclude that a system is computationally isolated. This constitutes an information-theoretic analogue of Lieb's and Yngvason's entropy principle. We finally speculate about the possibility of employing Maxwell's demon to show that adiabaticity and memorylessness are in fact connected in a deeper way than what the formal analogy proposed here \textit{prima facie} seems to suggest.}

\section{Introduction}
\label{sec:1}

Cover, in the attempt to set the second law of thermodynamics in a computational framework, concludes his work with the following suggestive observations~\cite{cover_1996}:
\begin{quotation}\label{quote}
	The second law of thermodynamics  says that uncertainty increases in closed physical systems and that the availability of useful energy decreases. If one can make the concept of ``physical information'' meaningful, it should be possible to \textit{augment the statement of the second law of thermodynamics with the statement, ``useful information becomes less available.''} Thus the ability  of a physical system  to act as a computer should slowly degenerate as the system becomes more amorphous and closer to equilibrium. A perpetual computer should be impossible [emphasis added].
\end{quotation}
Cover's analysis can be summarized as follows. He first argues, more or less implicitly, that the computational analogue of an adiabatically isolated system should be taken to be a system evolving---i.e., computing---without an external memory. (For this reason, in what follows we use the term ``computationally isolated'' as a synonym for ``memoryless.'') This observation leads him to consider stochastic memoryless processes, in particular discrete-time Markov chains. Cover then shows that, while entropy can increase or decrease in this setting, thus violating the thermodynamical second law, \textit{relative entropy} instead never increases. We refer to this statement as Cover's ``computational second law\footnote{Since the entropy of a distribution $p$ is the \textit{negative} of the relative entropy of $p$ with respect to the uniform distribution, it is clear that Cover's computational second law formally constitutes a relaxation of the second law of thermodynamics. Indeed, the former is satisfied in situations violating the latter. We will say more about the relation between thermodynamical and computational second laws in Section~\ref{sec:app}.}.'' On a technical side, what Cover proves in~\cite{cover_1996} is an expression of the monotonicity of the relative entropy under the action of a noisy channel. Thus Cover's second law is in fact a particular \textit{data-processing inequality}~\cite{cziszar-korner,cover-thomas}, and we can imagine that there are as many computational second laws as there are data-processing inequalities, all formalizing the idea that the information content of a system cannot increase without the presence of an external memory\footnote{Relations between data-processing inequalities, the second law of thermodynamics, and statistical mechanics have been studied also by Merhav in~\cite{merhav1,merhav2,merhav3}.}.

Cover hence shows that the condition of being memoryless is sufficient for a system to obey data-processing inequalities, i.e., computational second laws. The question we address in this paper concerns the other direction: is it possible to show that the memoryless condition is also necessary for the validity of all data-processing inequalities? Equivalently stated: is it true that a system, if it \textit{is not} computationally isolated, will necessarily violate some data-processing inequality? It is important to address these questions, if we want to understand how far the analogy between memorylessness and adiabaticity can be pushed. Here, in particular, we have in mind Lieb's and Yngvason's formulation of the second law of thermodynamics~\cite{lieb-yngv}, according to which a non-decreasing entropy is not only necessary but also sufficient for the existence of an adiabatic process connecting two thermodynamical states\footnote{More on this point can be found in Section~\ref{sec:app}.}. 

The aim of this paper is to provide a comprehensive framework that is able to answer the above questions. More specifically, we prove here a family of \textit{reverse data-processing theorems}, showing that as soon as a system is not computationally isolated, it must necessarily violate a data-processing inequality. The framework we construct is quite general and it can be applied to classical, quantum, and hybrid classical/quantum systems. In fact, it may even be extended in principle to generalized operational theories as it involves only basic notions like states, effects, and operations; this development is however beyond the scope of the present work.

Thus we are able to strengthen Cover's computational second law in two ways: on the one hand, we give it a converse, in a way that is analogous to what Lieb and Yngvason did the second law of thermodynamics. On the other hand, we include in the analysis the possibility of dealing with quantum systems and quantum memories.

The paper is organized as follows. We being in Section~\ref{sec:classical} with reviewing the data-processing inequality for a classical Markov chain. This is the encoding--channel--decoding model considered by Shannon to describe the simplest communication scenario. In this scenario we prove our first \textit{reverse data-processing theorem}. We also show how this relates with the theory of comparison of noisy channels, as introduced by Shannon~\cite{shannon_1958} and later developed by K\"orner and Marton~\cite{Korner1977}. In Section~\ref{sec:lemma} we state and prove a lemma that allows us to extend our considerations to the quantum case, and discuss the notion of \textit{quantum statistical morphisms}. In Section~\ref{sec:semiclassical} we study the case of a system, processing quantum information but outputting only classical data, and prove the corresponding reverse data-processing theorem. Section~\ref{sec:quantum} presents the general case of a fully quantum computer, i.e., a process with quantum inputs and quantum outputs. Finally, in Section~\ref{sec:app}, we briefly discuss about analogies and differences between thermodynamical and computational second laws. In particular, we speculate about the possibility that Maxwell's paradox (his ``demon'') may enable a deeper relation between adiabatic processes and memoryless processes, going beyond the formal analogy considered in this work. At the end of the paper, three appendices are available: the first, reviewing conventions, notations, and terminology used in this work; the second, containing a version of the minimax theorem; and the third, presenting (just for the sake of completeness) an elementary proof of the separation theorem for convex sets.

This work contains ideas that were presented during the Sixth Nagoya Winter Workshop (NWW2015) held in Nagoya on 9-13 March 2015. Part of the technical results presented here were first introduced in previous papers by the author~\cite{buscemi-qblackwell,buscemi-all-ent,buscemi-antideg,buscemi-markov,buscemi-divisibility,buscemi-prob-inf-trans}, building upon works of Shmaya~\cite{shmaya} and Chefles~\cite{chefles}.

\section{A Reverse-Data Processing Theorem for Classical Channels}
\label{sec:classical}

A \textit{data-processing inequality} is a mathematical statement formalizing the fact that the information content of a signal cannot be increased by post-processing. As there are many ways to quantify information, so there are many corresponding data-processing inequalities. Such inequalities, however, despite formalizing the same intuitive concept, are not all \textit{logically equivalent}: some may be stronger than (i.e., imply) others, some may be easier to prove, some may be better suited for a particular problem at hand. Data-processing inequalities usually find application in information theory when proving that a given approach (coding strategy) is optimal: if a better coding were possible, that would result in the violation of one or more data-processing inequalities, thus leading to an absurd. In this sense, data-processing inequalities provide a sort of ``sanity check'' of the result.

One of the simplest scenarios in which a data-processing inequality can be formulated is the following~\cite{cziszar-korner,cover-thomas}. Given are two noisy channels $w_1:\set{X}\to\set{Y}$ and $w_2:\set{Y}\to\set{Z}$. Then, for any set $\set{U}$ and any initial joint distribution $p(x,u)$, the joint distribution $\sum_xw_2(z|y)w_1(y|x)p(x,u)$ satisfies the following inequalities:
\[
I(U;Y)\ge I(U;Z)\;.
\]
[Notations and definitions used here and in what follows are given for completeness in Appendix~1.] Referring to the situation depicted in Fig.~\ref{fig:shannon-scheme} and interpreting $U$ as the message, $X$ as the signal, $w_1$ as the communication channel, $Y$ as the output signal, $w_2$ as the decoding, and $Z$ as the recovered message, the above inequality formalizes the fact that the information content carried by the signal about the message cannot be increased by any decoding performed locally at the receiver. Of course, this does not mean that decoding should be avoided (actually, in most cases a decoding is necessary to make the signal readable to the receiver), but that no decoding is able to add \textit{a posteriori} more information to what is already carried by the signal.

\begin{figure}[t]
\centering
\includegraphics[width=0.7\linewidth]{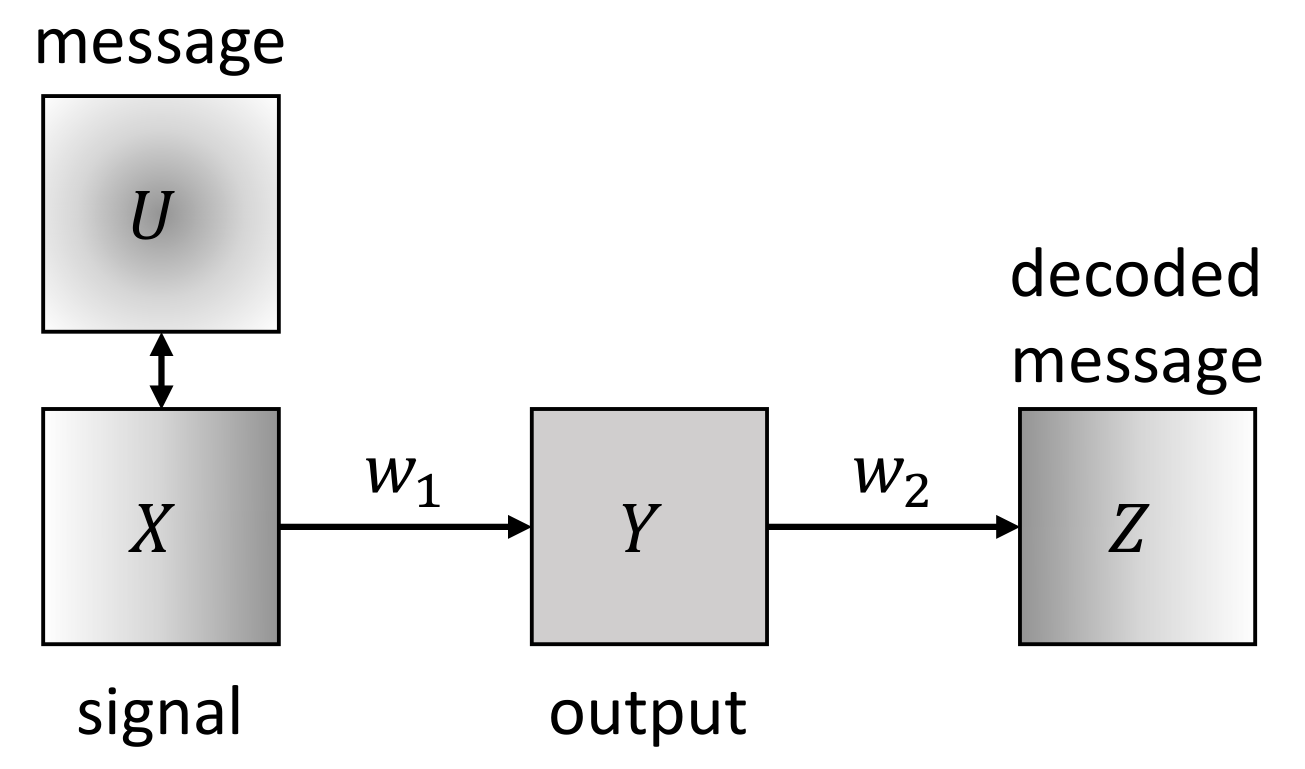}
\caption{Shannon's basic communication scheme: a message $U$ is encoded on the signal $X$ (i.e., a joint distribution $(U,X)$ is given), which is transmitted to the receiver via the communication channel $w_1$. The receiver obtains the output $Y$ and processes it according to the decoding function (another channel $w_2$) to obtain the recoveder message $Z$.}
\label{fig:shannon-scheme}
\end{figure}

Data-processing inequalities hence provide necessary conditions for the ``locality'' of the information-processing device. Namely, data-processing inequalities must be obeyed whenever the physical process carrying the message from the sender to the receiver is composed by computationally isolated parts (encoding, transmission, decoding, etc.). Any information that is communicated must be transmitted via a physical signal: as such, in the absence of an external memory, information can only decrease, never increase, along the transmission. Hence, ``locality'' in this sense can be understood as the condition that the process $U\to X\to Y\to Z$ forms a Markov chain. For this reason, we refer to such locality as ``Markov locality,'' in order to avoid confusion with other connotations of the word\footnote{In this work, memoryless process, Markov local process, and computationally isolated process are all synonyms. We prefer however to maintain all three terms because they in fact highlight different aspects of the same information-theoretic concept.}.

In this paper we aim to derive statements that provide \textit{sufficient} conditions for Markov locality, in the form of a set of information-theoretic inequalities. We refer to such statements as \textit{reverse data-processing theorems}. For example, a first attempt in this direction would be to prove the following:
\begin{quotation}
	Given are two noisy channels $w:\set{X}\to\set{Y}$ and $w':\set{X}\to\set{Z}$. Suppose that, for any set $\set{U}$ and for any initial joint distribution $p(x,u)$, the resulting distributions $\sum_xw(y|x)p(x,u)$ and $\sum_xw'(z|x)p(x,u)$ always satisfy the inequality $I(U;Y)\ge I(U;Z )$. Then there exists a noisy channel $\varphi:\set{Y}\to\set{Z}$ such that $w'(z|x)=\sum_y\varphi(z|y)w(y|x)$.
\end{quotation} 
Notice that, in the above statement, the two given channels $w$ and $w'$ are assumed to have the same input alphabet: this is a consequence of the fact that we are now formulating a \textit{reverse} data-processing theorem, so that the existence of a Markov-local decoding (the channel $\varphi$) is something to be proved, rather than being a datum. Interpreting the four random variables $(U,X,Y,Z)$ as before, if the reverse data-processing theorem holds, then we can conclude that \textit{any violation of Markov locality} is detectable, in the precise sense that the data-processing inequality has to be violated at some point along the communication process.

\subsection{Comparison of noisy channels}

A reverse data-processing theorem can be understood as a statement about the comparison of two noisy channels. Hence we want to introduce ordering relations between noisy channels, capturing the idea that one channel is able to transmit ``more information'' than another. This problem, first considered by Shannon~\cite{shannon_1958}, is intimately related to the theory of statistical comparisons~\cite{blackwell_equivalent_1953,torgersen_comparison_1991,cohen_comparisons_1998,liese-miescke}, even though this connection was not made until recently~\cite{Raginsky2011}. The theory of comparison of noisy channels received a thorough treatment by K\"orner and Marton, who in Ref.~\cite{Korner1977} introduce the following definitions (the notation used here follows~\cite{ElGamal1977}):
\begin{definition}\label{def:less-noisy}
	Given are two noisy channels, $w:\set{X}\to\set{Y}$ and $w':\set{X}\to\set{Z}$.
	\begin{enumerate}[label=\roman*)]
		\item the channel $w$ is said to be \textit{less noisy} than $w'$ if and only if, for any set $\set{U}$ and any joint distribution $p(x,u)$, the resulting distributions $\sum_x w(y|x)p(x,u)$ and $\sum_x w'(z|x)p(x,u)$ always satisfy the inequality
		\begin{equation}\label{eq:less-noisy}
			H(U|Y)\le H(U|Z)\;;
		\end{equation}
		\item the channel $w$ is said to be \textit{degradable} into $w'$ if and only if there exists another channel $\varphi:\set{Y}\to\set{Z}$ such that
		\begin{equation}\label{eq:degradable}
			w'(z|x)=\sum_y\varphi(z|y)w(y|x)\;.
		\end{equation}
	\end{enumerate}
	\qed
\end{definition}

Since $I(U;Y)\ge I(U;Z)$ if and only if $H(U|Y)\le H(U|Z)$, we immediately notice that the reverse data-processing theorem, as tentatively formulated above, is equivalent to the implication (i)$\implies$(ii): indeed, the reverse implication, (ii)$\implies$(i), is the usual data-processing inequality. K\"orner and Marton provide an explicit counterexample showing that
\begin{equation}
	\text{degradable}\underset{\centernot\impliedby}{\implies}\text{less noisy}\;.
\end{equation}
This means that, if a reverse data-processing theorem holds, it must be formulated differently.

\subsection{Replacing $H$ with $\hmin$}

Even though we know that ``less noisy'' does not imply ``degradable,'' in what follows we show that just a slight formal modification in the definition of ``less noisy, '' Eq.~(\ref{eq:less-noisy}), is enough to obtain the sought-after reverse data-processing theorem. Such a slight modification consists in replacing, in point~(i) of Definition~\ref{def:less-noisy}, the Shannon conditional entropy $H(\cdot|\cdot)$ with the conditional min-entropy $\hmin(\cdot|\cdot)$.

\begin{theorem}\label{theo:classical-reverse}
	\begin{svgraybox}
	Given are two noisy channels $w:\set{X}\to\set{Y}$ and $w':\set{X}\to\set{Z}$. The following are equivalent:
	\begin{enumerate}[label=\roman*)]
		\item for any set $\set{U}$ and for any initial joint distribution $p(x,u)$, the resulting distributions $\sum_x w(y|x)p(x,u)$ and $\sum_x w'(z|x)p(x,u)$ always satisfy the inequality
		\begin{equation}\label{eq:class-min-ineq}
			\hmin(U|Y)\le\hmin(U|Z)\;;
		\end{equation}
		\item $w$ is degradable into $w'$, namely, there exists another channel $\varphi:\set{Y}\to\set{Z}$ such that $w'(z|x)=\sum_y\varphi(z|y)w(y|x)\;$.
	\end{enumerate}
	\end{svgraybox}
\end{theorem}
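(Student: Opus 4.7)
The plan is to cast the min-entropy inequality in its operational form, using that $2^{-\hmin(U|Y)}=\pguess(U|Y)$ for the conditional min-entropy; then (\ref{eq:class-min-ineq}) becomes the statement that the optimal guessing probability of $U$ from $Y$ is at least that from $Z$, uniformly over all input ensembles $p(u,x)$.

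The direction (ii)$\Rightarrow$(i) is immediate: any decoder $\{E_u(z)\}$ acting on the output of $w'$ can be pulled back through $\varphi$ to a decoder $E'_u(y)=\sum_z E_u(z)\varphi(z|y)$ acting on the output of $w$, attaining the same guessing probability; optimizing over decoders then gives $\pguess(U|Y)\ge\pguess(U|Z)$.

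For the non-trivial direction I would argue by contraposition. The set $C_w\defeq\{\varphi\circ w:\varphi\text{ a channel }\set{Y}\to\set{Z}\}$ is convex and compact (as the linear image of the compact polytope of channels $\set{Y}\to\set{Z}$), so if $w'\notin C_w$ the separation theorem of Appendix~3 yields real coefficients $\{\lambda(x,z)\}$ with
\[
	\sum_{x,z}\lambda(x,z)\,w'(z|x)\;>\;\sup_{c\in C_w}\sum_{x,z}\lambda(x,z)\,c(z|x)\;=\;\sum_y\max_z\sum_x\lambda(x,z)\,w(y|x),
\]
the last equality coming from the fact that the optimal $\varphi(\cdot|y)$ is deterministic for each $y$. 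Because both $w'$ and every element of $C_w$ satisfy $\sum_z c(z|x)=1$, one can shift $\lambda(x,z)\mapsto\lambda(x,z)-\min_{z'}\lambda(x,z')$ without affecting the strict inequality; the coefficients may therefore be taken to be nonnegative (and not identically zero, since otherwise the separation would be trivial).

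These nonnegative coefficients, after normalization by $N\defeq\sum_{x,z}\lambda(x,z)>0$, define a bona fide joint distribution $p(u,x)\defeq\lambda(x,u)/N$ on $\set{U}\times\set{X}$ with $\set{U}\defeq\set{Z}$. Using the (generally suboptimal) identity decoder $\hat u=z$ on the output of $w'$ gives $\pguess(U|Z)\ge\tfrac{1}{N}\sum_{x,z}\lambda(x,z)w'(z|x)$, while the Bayes-optimal decoder on the output of $w$ gives $\pguess(U|Y)=\tfrac{1}{N}\sum_y\max_u\sum_x\lambda(x,u)w(y|x)$. The separation inequality then forces $\pguess(U|Z)>\pguess(U|Y)$, equivalently $\hmin(U|Z)<\hmin(U|Y)$, contradicting (i). The step I expect to require the most care is precisely this reduction of the separating hyperplane to a probability distribution on the input: the shift trick exploits the affine (not linear) character of the channel simplex, and without it the raw output of the separation theorem does not map naturally onto the guessing game that underlies the operational meaning of $\hmin$.
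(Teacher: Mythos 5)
Your argument is correct, but it takes a genuinely different route from the paper's proof of this theorem. The paper proves (i)$\implies$(ii) directly: it rewrites (\ref{eq:class-min-ineq}) as $\pguess(U|Y)\ge\pguess(U|Z)$, fixes $\set{U}\equiv\set{Z}$ with the identity decoder on the $Z$ side, obtains the condition $\max_p\min_\varphi\sum_{z',x}\Delta_\varphi(z',x)p(x,z')\le 0$ with $\Delta_\varphi(z',x)\defeq w'(z'|x)-\sum_y\varphi(z'|y)w(y|x)$, swaps the two optimizations by the minimax theorem (Appendix~2), and finally uses the normalization identity $\sum_{z',x}\Delta_\varphi(z',x)=0$ to force $\Delta_\varphi\equiv 0$ for the optimal $\varphi$. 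You instead argue by contraposition with the separation theorem (Appendix~3): you separate $w'$ from the compact convex set $C_w$ of degradations of $w$, exploit $\sum_z c(z|x)=1$ to shift the separating functional to a nonnegative, not identically zero one (the shift subtracts the same constant $\sum_x\min_{z'}\lambda(x,z')$ from both sides, so strictness survives), normalize it into a joint distribution on $\set{Z}\times\set{X}$, and then read the two sides of the hyperplane as guessing probabilities—identity decoder on $Z$, Bayes-optimal decoder on $Y$, whose value $\sum_y\max_z\sum_x\lambda(x,z)w(y|x)$ is indeed the support function of $C_w$ since the optimal $\varphi(\cdot|y)$ may be taken deterministic. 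All these steps are sound, and your construction has the virtue of producing an explicit witnessing ensemble $p(x,u)$ with $\set{U}\equiv\set{Z}$, consistent with the paper's Remark after Theorem~\ref{theo:classical-reverse}; it also mirrors the separation-based strategy the paper itself uses in the quantum setting (Lemma~\ref{lem:fundamental}), and the paper explicitly notes that the classical theorem admits both a minimax and a separation proof. What the paper's minimax route buys, as it points out, is an easier handle on the approximate version of the statement, which the purely geometric contrapositive argument does not immediately give.
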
	

\begin{proof}
	(ii)$\implies$(i) is a direct consequence of the data-processing inequality for $\hmin$. Suppose that there exists another conditional probability distribution $\varphi(z|y)$ such that $w'(z|x)=\sum_y\varphi(z|y)w(y|x)$. This means that the random variable $Z$ is obtained locally from $Y$, i.e., the four random variables $(U,X,Y,Z)$ form a Markov chain $U\to X\to Y\to Z$. This implies that~(\ref{eq:class-min-ineq}) holds.
	
	In order to prove (i)$\implies$(ii), let us assume that the inequality in~(\ref{eq:class-min-ineq}) holds for any initial joint distribution $p(x,u)$. Exponentiating both sides, and using Eq.~(\ref{eq:class-hmin}), this is equivalent to
	\begin{equation}
		\pguess(U|Y)\ge\pguess(U|Z)\;,
	\end{equation}
	namely,
	\begin{equation}\label{eq:class-guess-prob}
		\max_{\varphi}\sum_{u,y,x}\varphi(u|y)w(y|x)p(x,u) \ge \max_{\varphi'}\sum_{u,z,x}\varphi'(u|z)w'(z|x)p(x,u)\;,
	\end{equation}
	for all choices of $p(x,u)$. In the above equation, the noisy channels $\varphi$ and $\varphi'$ represent the decision functions that the statistician designs in order to optimally guess the value of $U$.
	
	Let us choose $U$ such that its support coincides with that of $Z$, i.e., $\set{U}\equiv\set{Z}$. We can therefore denote its states by $z'$. Let us also fix the guessing strategy on the right-hand side of~(\ref{eq:class-guess-prob}) to be $\varphi'(z'|z)\equiv\delta_{z',z}$, i.e., 1 if $z'=z$ and 0 otherwise. Then, we know that there exists a decision function  $\varphi(z'|y)$ such that
	\begin{align}
		0&\ge\sum_{z',z,x}\delta_{z',z}w'(z|x)p(x,z')-\sum_{z',y,x}\varphi(z'|y)w(y|x)p(x,z')\\
		&=\sum_{z',x}w'(z'|x)p(x,z')-\sum_{z',y,x}\varphi(z'|y)w(y|x)p(x,z')\\
		&=\sum_{z',x}\left[w'(z'|x)p(x,z')-\sum_y\varphi(z'|y)w(y|x)p(x,z') \right]\\
		&=\sum_{z',x}\left[w'(z'|x)-\sum_y\varphi(z'|y)w(y|x) \right]p(x,z')\;.
	\end{align}
	In other words, for any $p(x,z')$, there exists a $\varphi(z'|y)$ such that the above inequality holds. This is equivalent to say that
	\begin{equation}
		\max_{p}\min_{\varphi}\sum_{z',x}\left[w'(z'|x)-\sum_y\varphi(z'|y)w(y|x) \right]p(x,z')\le0\;.
	\end{equation}
	We now invoke the minimax theorem (in the form reported in Appendix~2, Theorem~\ref{th:minimax}) and exchange the order of the two optimizations:
	\begin{equation}\label{eq:class-minmax}
		\min_{\varphi}\max_{p}\sum_{z',x}\left[w'(z'|x)-\sum_y\varphi(z'|y)w(y|x) \right]p(x,z')\le0\;.
	\end{equation}
	
	Let us now introduce the quantity
	\begin{equation}
		\Delta_\varphi(z',x)\defeq w'(z'|x)-\sum_y\varphi(z'|y)w(y|x)\;.
	\end{equation}
	First of all, we notice that the maximum in Eq.~(\ref{eq:class-minmax}) is reached when the distribution $p(x,z')$ is entirely concentrated on an entry where $\Delta_\varphi(z',x)$ is maximum, that is,
	\begin{align}
		0&\ge\min_{\varphi}\max_{p}\sum_{z',x}\left[w'(z'|x)-\sum_y\varphi(z'|y)w(y|x) \right]p(x,z')\\
		&= \min_{\varphi}\max_{z',x}\Delta_\varphi(z'x)\;.
	\end{align} 
	In general, $\Delta_\varphi(z',x)$ does not have a definite sign, however, since $\sum_{z',x}\Delta_\varphi(z',x)=0$ (as a consequence of the normalization of probabilities), it must be that $\max_{z',x}\Delta_\varphi(z',x)\ge 0$ (otherwise, of course, one would have $\sum_{z',x}\Delta_\varphi(z',x)<0$). The above inequality hence implies that $\min_\varphi\max_{z',x}\Delta_\varphi(z',x)=0$. In turns this implies, again because $\sum_{z',x}\Delta_\varphi(z',x)=0$, that $\Delta_\varphi(z',x)=0$ for all $z'$ and $x$. In other words, we showed that there exists a $\varphi(z'|y)$ such that
	\begin{equation}
		w'(z'|x)=\sum_y\varphi(z'|y)w(y|x),
	\end{equation}
	for all $z',x$, which coincides with the definition of degradability.\qed
\end{proof}

\begin{remark}
	From the proof we see that in point~(ii) of Theorem~\ref{theo:classical-reverse} it is possible to restrict, without loss of generality, the random variable $U$ to be supported on the set $\set{Z}$, i.e., the same supporting the output of $w'$.\qed
\end{remark}

\section{The Fundamental Lemma for Quantum Channels}
\label{sec:lemma}

The following lemma plays a crucial role in the derivation of reverse data-processing theorems valid in the quantum case. 

\begin{lemma}\label{lem:fundamental}
	\begin{svgraybox}
	Let $\Phi_A:\op{L}(\set{H}_A)\to\op{L}(\set{H}_B)$ and $\Phi'_A:\op{L}(\set{H}_A)\to\op{L}(\set{H}_{B'})$ be two quantum channels. For any set $\set{U}=\{u\}$, the following are equivalent:
	\begin{enumerate}[label=\roman*)]
		\item for all ensembles $\{p(u);\omega^u_A\}\;$,
		\begin{equation}\label{eq:quantum-ii}
		\pguess\{p(u);\Phi_A(\omega^u_A)\}\ge\pguess\{p(u);\Phi'_A(\omega^u_A)\}\;;
		\end{equation}
		\item for any POVM $\{Q^u_{B'}\}$, there exists a POVM $\{P^u_B\}$ such that
			\begin{equation}\label{eq:quantum-iv}
			\Tr{\Phi'_A(\omega_A)\ Q^u_{B'}}=\Tr{\Phi_A(\omega_A)\ P^u_B}\;,
			\end{equation}
		for all $u\in\set{U}$ and all $\omega_A\in\op{D}(\sH_A)$.
	\end{enumerate}
	\end{svgraybox}
\end{lemma}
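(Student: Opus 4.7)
The strategy is to adapt the minimax argument used in the proof of Theorem~\ref{theo:classical-reverse}, replacing input distributions $p(x,u)$ on $\set{X}\times\set{U}$ by classical--quantum states on $\set{U}\otimes\sH_A$ and classical decision functions by POVMs. The direction (ii)$\implies$(i) is immediate: given any ensemble $\{p(u);\omega^u_A\}$ and an optimal guessing POVM $\{Q^u_{B'}\}$ for the family $\{p(u);\Phi'_A(\omega^u_A)\}$, the POVM $\{P^u_B\}$ granted by~(\ref{eq:quantum-iv}) reproduces the guessing probability exactly and is a feasible (though not necessarily optimal) candidate at the output of $\Phi_A$, which gives~(\ref{eq:quantum-ii}).

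For (i)$\implies$(ii) I would fix an arbitrary POVM $\{Q^u_{B'}\}$ and look for a $\{P^u_B\}$ realizing~(\ref{eq:quantum-iv}). Repackaging the ensemble as a single family of positive operators $\sigma^u_A\defeq p(u)\,\omega^u_A$ with $\sum_u\Tr{\sigma^u_A}=1$, i.e.\ as a classical--quantum state on $\set{U}\otimes\sH_A$, the hypothesis~(\ref{eq:quantum-ii}) takes the form
\[
\min_{\{\sigma^u_A\}}\max_{\{P^u_B\}}\sum_u\bigl[\Tr{\Phi_A(\sigma^u_A)\,P^u_B}-\Tr{\Phi'_A(\sigma^u_A)\,Q^u_{B'}}\bigr]\ge 0\;.
\]
Since the payoff is bilinear in the two arguments and the feasible sets are convex and compact in the finite-dimensional setting, Theorem~\ref{th:minimax} permits the exchange of the two optimizations, producing a POVM $\{P^{u\ast}_B\}$ for which
\[
\sum_u\bigl[\Tr{\Phi_A(\sigma^u_A)\,P^{u\ast}_B}-\Tr{\Phi'_A(\sigma^u_A)\,Q^u_{B'}}\bigr]\ge 0
\]
for every admissible $\{\sigma^u_A\}$.

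Exactly as in the classical proof, the infimum over $\{\sigma^u_A\}$ is attained by concentrating all weight on a single index $u$ and a single pure state $\omega\in\op{D}(\sH_A)$, so the previous display forces the pointwise inequalities $\Tr{\Phi_A(\omega)\,P^{u\ast}_B}\ge\Tr{\Phi'_A(\omega)\,Q^u_{B'}}$ for every $u$ and every $\omega$. Summing these over $u$ and invoking the completeness relations $\sum_u P^{u\ast}_B=I_B$ and $\sum_u Q^u_{B'}=I_{B'}$ yields $\Tr{\Phi_A(\omega)}-\Tr{\Phi'_A(\omega)}=1-1=0$, and a sum of non-negative terms vanishing forces each term to vanish; hence each pointwise inequality is in fact an equality, which is~(\ref{eq:quantum-iv}). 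The only delicate step I anticipate is verifying the hypotheses of the minimax theorem: the reparametrization $\sigma^u_A=p(u)\,\omega^u_A$ linearizes the payoff and recasts the feasible domain as the convex compact set of classical--quantum states, so Theorem~\ref{th:minimax} should apply without further difficulty.
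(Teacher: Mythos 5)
Your argument is correct in substance, but it takes a genuinely different route from the paper's. The paper proves (i)$\implies$(ii) by passing to Choi operators, restricting without loss of generality to ensembles with maximally mixed average (so that ensembles become POVMs on a reference system $R$), extending the resulting inequality from POVMs $\{E^u_R\}$ to arbitrary tuples of Hermitian operators via an affine rescaling, and finally applying the separation theorem for convex sets (Corollary~\ref{theo:sep2}) to obtain the inclusion $\set{C}(\chi;\set{U})\supseteq\set{C}'(\chi';\set{U})$ of the two sets of output tuples, which is exactly condition (ii). You instead fix the POVM $\{Q^u_{B'}\}$, set up a zero-sum game between sub-normalized classical--quantum states $\{\sigma^u_A\}$ and POVMs $\{P^u_B\}$, exchange the two optimizations, and then use normalization ($\sum_u P^{u\ast}_B=I_B$, $\sum_u Q^u_{B'}=I_{B'}$, trace preservation of the channels) to upgrade the pointwise inequalities to equalities; this mirrors the paper's \emph{classical} proof of Theorem~\ref{theo:classical-reverse} rather than its quantum lemma, and it buys a shorter argument that dispenses with Choi operators, the Hermitian-extension step, and the separation theorem, at the price of leaning on a minimax principle. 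The one point you must repair is the appeal to Theorem~\ref{th:minimax}: as stated it requires one of the two feasible sets to be a polytope, and neither the set of classical--quantum states on $\set{U}\otimes\sH_A$ nor the set of POVMs on $\sH_B$ is a polytope once the relevant Hilbert space has dimension at least two (their extreme points form continua), so the theorem in Appendix~2 does not literally apply. The exchange of $\min$ and $\max$ is nevertheless valid because the payoff is bilinear (affine in each argument) and both feasible sets are compact and convex in a finite-dimensional real vector space, but you need to invoke a minimax theorem covering that situation (e.g.\ Sion's minimax theorem, or the von Neumann--Ky Fan form for compact convex sets) rather than the polytope version reproduced in the paper; with that substitution your proof goes through.
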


\begin{proof}
	The fact that (ii) implies (i) follows by definition of guessing probability. We therefore prove the converse, namely, that (i) implies (ii).
	
	Let us rewrite condition~(\ref{eq:quantum-ii}) explicitly as follows: for all ensembles $\{p(u);\omega^u_A \}\;$,
	\begin{equation}\label{eq:cond-start}
		\max_{P}\sum_up(u)\Tr{\Phi_A(\omega^u_A)\ P^u_B}\ge\max_{Q}\sum_up(u)\Tr{\Phi'_A(\omega^u_A)\ Q^u_{B'}}\;,
	\end{equation}
	where the maxima are taken over all possible POVMs. Introduce now an auxiliary Hilbert space $\set{H}_R\cong\set{H}_{A}$, and denote by $\phi^+_{RA}$ a fixed maximally entangled in $\op{D}(\set{H}_R\otimes\set{H}_A)$. Construct then the Choi operators corresponding to channels $\Phi$ and $\Phi'$, namely,
	\begin{equation}\label{eq:precedes-explanation}
		\chi_{RB}\defeq(\id_R\otimes\Phi_A)\phi^+_{RA}\qquad\text{and}\qquad\chi'_{RB'}\defeq(\id_R\otimes\Phi'_A)\phi^+_{RA}\;.
	\end{equation}
	
	Noticing that, for any ensemble $\{p(u);\omega^u_A\}$ with $\sum_up(u)\omega^u_A= I_A/d_A$, there exists a POVM $\{E^u_R\}$ such that $p(u)\omega^u_A=\PTr{R}{\phi^+_{RA}\ (E^u_R\otimes I_A)}$, we immediately see that, if condition~(\ref{eq:cond-start}) above holds, then, for any POVM $\{E^u_R\}$,
	\begin{equation}\label{eq:povm-pre-obs}
		\max_P\sum_u\Tr{\chi_{RB}\ (E^u_R\otimes P^u_{B})}\ge\max_Q\sum_u\Tr{\chi'_{RB'}\ (E^u_R\otimes Q^u_{B'})}\;.
	\end{equation}
	
	We now prove that condition~(\ref{eq:povm-pre-obs}) above in turns implies that, for any collection of Hermitian operators $\{O^u_R \}$,
	\begin{equation}\label{eq:cond-obs}
		\max_P\sum_u\Tr{\chi_{RB}\ (O^u_R\otimes P^u_{B})}\ge\max_Q\sum_u\Tr{\chi'_{RB'}\ (O^u_R\otimes Q^u_{B'})}\;.
	\end{equation}
	The crucial observation here is that, given a collection of Hermitian operators $\{O^u_R\}\;$, we can always derive from it a POVM $\{E^u_R\}$ given by
	\begin{equation}
	E^u_R\defeq\frac{1}{\alpha|\set{U}|}\left\{O^u_R+\alpha I_R-\frac{1}{|\set{U}|}\Sigma_R \right\}\;,
	\end{equation}
	with $\Sigma_R\defeq\sum_uO^u_R$ and $\alpha>0$ sufficiently large so that $O^u_R+\alpha I_R-|\set{U}|^{-1}\Sigma_R$ is nonnegative for all $u$. Therefore, assuming that inequality~(\ref{eq:povm-pre-obs}) holds for any POVM $\{E^u_R \}$, we have that
	\begin{align}
		&\max_P\sum_u\Tr{\chi_{RB}\ (O^u_R\otimes P^u_{B})}\nonumber\\
		&=\alpha|\set{U}|\max_P\sum_u\Tr{\chi_{RB}\ (E^u_R\otimes P^u_{B})}-\alpha\Tr{\chi_{RB}}+\frac{1}{|\set{U}|}\Tr{\chi_{RB}\ (\Sigma_R\otimes I_B)}\nonumber\\
		&=\alpha|\set{U}|\max_P\sum_u\Tr{\chi_{RB}\ (E^u_R\otimes P^u_{B})}-\alpha+\frac{1}{|\set{U}|}\Tr{\PTr{B}{\chi_{RB}}\ \Sigma_R}\nonumber\\
		&\ge \alpha|\set{U}|\max_Q\sum_u\Tr{\chi'_{RB'}\ (E^u_R\otimes Q^u_{B'})}-\alpha+\frac{1}{|\set{U}|}\Tr{\PTr{B'}{\chi'_{RB'}}\ \Sigma_R}\label{eq:cruc-ineq}\\
		&=\max_Q\sum_u\Tr{\chi'_{RB'}\ (O^u_R\otimes Q^u_{B'})},\nonumber
	\end{align}
	for any collection of Hermitian operators $\{O^u_R\}\;$. Inequality~(\ref{eq:cruc-ineq}) above is a consequence of condition~(\ref{eq:povm-pre-obs}) together with the identity $\PTr{B}{\chi_{RB}}=\PTr{B'}{\chi'_{RB'}}=I_R/d_R$. Hence we showed that condition~(\ref{eq:cond-obs}) holds if condition~(\ref{eq:povm-pre-obs}) holds, even though the former looks at first sight more general than the latter. The vice versa is true simply because any POVM is, in particular, a family of Hermitian operators.
	
	Let us now denote by $\set{L}(\set{U})$ the set of operator tuples
	\begin{equation}
		\vec{a}\equiv\left(a^u:u\in\set{U} \right)\;,\qquad a^u\in\op{L}_H(\sH_R)\;,
	\end{equation}
	with inner product
	\begin{align}
		\vec{a}\cdot\vec{b}\defeq\sum_u\Tr{a^u b^u}\;.
	\end{align}
	We then define $\set{C}(\chi;\set{U})$ as the convex subset of $\set{L}(\set{U})$ containing tuples $\vec{b}$ such that $b^u\defeq\PTr{B}{\chi_{RB}\ (I_R\otimes P^u_B)}$, for varying POVM $\{P^u_B\}$. [The fact that $\set{C}(\chi;\set{U})$ is convex is a direct consequence of the fact that the set of POVMs supported on $\set{U}$ is convex.] In the same way, we also define $\set{C}'(\chi';\set{U})$. For the sake of simplicity of notation, when no confusion arises, we simply denote $\set{C}(\chi;\set{U})$ as $\set{C}$ and $\set{C}'(\chi';\set{U})$ as $\set{C}'$. Using this notation, condition~(\ref{eq:cond-obs}) becomes
	\begin{equation}
		\max_{\vec{b}\in\set{C}}\vec{a}\cdot\vec{b}\ge\max_{\vec{b}'\in\set{C}'}\vec{a}\cdot\vec{b}'\;,
	\end{equation}
	for all $\vec{a}\in\set{L}(\set{U})$. [Here $a^u=O^u_R$.]
	
	Hence, we turned the initial conditions involving guessing probabilities into a family of linear constraints on two convex sets, $\set{C}$ and $\set{C}'$. Then, a direct application of the separation theorem for convex sets (see Corollary~\ref{theo:sep2} in Appendix~3), leads us to conclude that
	\begin{equation}
		\set{C}(\chi;\set{U})\supseteq\set{C}'(\chi';\set{U})\;.
	\end{equation}
	
	In other words, condition~(\ref{eq:quantum-ii}) in the statement of the lemma implies that, for any POVM $\{Q^u_{B'}\}$, there exists a POVM $\{P^u_B\}$ such that
	\begin{equation}\label{eq:almost-final-lemma}
		\PTr{B}{\chi_{RB}\ (I_R\otimes P^u_B)}=\PTr{B'}{\chi'_{RB'}\ (I_R\otimes Q^u_{B'})}\;,
	\end{equation} 
	for all $u\in\set{U}$.
	
	The final step consists in noticing that any state $\omega_A$ can be written as $\PTr{R}{\phi^+_{RA}\ (E_R\otimes I_A)}$ for some $E_R\in\op{L}_+(\sH_R)$. Therefore, multiplying both sides of~(\ref{eq:almost-final-lemma}) by $E_R$ and taking the trace, we obtain
	\begin{align}
		\Tr{\Phi_A(\omega_A)\ P^u_B}&=\Tr{\chi_{RB}\ (E_R\otimes P^u_B)}\\
		&=\Tr{\chi'_{RB'}\ (E_R\otimes Q^u_{B'})}\\
		&=\Tr{\Phi'_A(\omega_A)\ Q^u_{B'}}\;,
	\end{align}
	which of course holds for any choice of $E_R$, that is, $\omega_A$, as claimed.\qed
\end{proof}

\begin{remark}
	As explained in the paragraph following Eq.~(\ref{eq:precedes-explanation}), the above proof shows that, in particular, the ensembles $\{p(u);\omega^u_A \}$ in point~(i) can be restricted, without loss of generality, to ensembles with maximally mixed average, i.e., $\sum_up(u)\omega^u_A\propto I_A$.\qed
\end{remark}	

\begin{remark}\label{rem:stat-morph}
	We notice that point~(ii) can be alternatively formulated as follows: for any POVM $\{Q^u_{B'}\}$, there exists a POVM $\{P^u_B\}$ such that
	\begin{equation}\label{eq:quantum-iii}
	\left(\Phi'\right)^\dag\left(Q^u_{B'}\right)=\Phi^\dag\left(P^u_B\right)\;,
	\end{equation}
	for all $u\in\set{U}$, where $\Phi^\dagger$ denotes the trace-dual defined in Eq.~(\ref{eq:trace-dual}). \qed
\end{remark}

\subsection{Quantum Statistical Morphisms}

Let us now choose the set $\set{U}$ in Lemma~\ref{lem:fundamental} so that its size $|\set{U}|$ is equal to $(\text{dim}\sH_{B'})^2$. Assuming that channels $\Phi$ and $\Phi'$ actually satisfy either~(\ref{eq:quantum-ii}) or~(\ref{eq:quantum-iv}), let us set the POVM $\{Q^u_{B'}\}$ to  be informationally complete, that is, $\operatorname{span}\{Q^y_{B'}\}=\op{L}(\sH_{B'})$. Then, if $\{P^u_B\}$ is any POVM satisfying the equality~(\ref{eq:quantum-iii}) in Remark~\ref{rem:stat-morph}, the relation
\begin{equation}
	Q^y_{B'}\longmapsto P^y_B\;,\qquad y\in\set{Y}\;,
\end{equation}
can be used to define a linear map $\Gamma:\op{L}(\sH_{B})\to\op{L}(\sH_{B'})$ with the following properties:
\begin{enumerate}
	\item let $\{\Xi^y_{B'}\}$ be the unique dual of $\{Q^y_{B'}\}$, in the sense that $X_{B'}=\sum_y\Tr{Q^y_{B'}\ X_{B'}}\Xi^y_{B'}\;$, for all $X_{B'}\in\op{L}(\sH_{B'})\;$; then the action of $\Gamma$ is given by $\Gamma(\cdot)=\sum_y\Tr{P^y_{B}\ \cdot}\;\Xi^y_{B'}\;$;
	\item $\Gamma$ is Hermiticity-preserving, i.e., $X=X^\dag$ implies that $\Gamma(X)=\left[ \Gamma(X) \right]^\dag\;$;
	\item $\Gamma$ is trace-preserving;
	\item $\Phi'=\Gamma\circ\Phi\;$.
\end{enumerate}
In particular, the map $\Gamma$, as defined above, is positive and trace-preserving on the output (meant as the whole linear range) of $\Phi$. In order to prove this, let $X_A\in\op{L}(\sH_A)$ be any operator such that $\Phi_A(X_A)\ge0$. (Notice that $X_A$ need not be positive itself.) Then $\Gamma_B(\Phi_A(X_A))\ge 0$. This is because $\Gamma\circ\Phi=\Phi'$ and we know, from Eq.~(\ref{eq:quantum-iv}), that for any positive operator $Q_{B'}$ there exists a positive operator $P_B$ such that $\Tr{Q_{B'}\ \Gamma_B(\Phi_A(X_A))}=\Tr{Q_{B'}\ \Phi'_A(X_A)}=\Tr{P_B\ \Phi_A(X_A)}$. Hence, we know that for any positive operator $Q_{B'}$, $\Tr{Q_{B'}\ \Gamma_B(\Phi_A(X_A))}\ge0$ whenever $\Phi_A(X_A)\ge 0$, which is the definition of positivity.

Following the terminology of~\cite{morse,cencov}, the following definition was introduced in~\cite{buscemi-qblackwell}:

\begin{definition}
	Given a channel $\Phi:\op{L}(\sH_A)\to\op{L}(\sH_B)$, a linear map $\Gamma:\op{L}(\sH_B)\to\op{L}(\sH_C)$ is said to be a \textit{quantum statistical morphism} of $\Phi$ if and only if, for any state $\omega_A$ and any POVM $\{Q^y_C\}$, there exists a POVM $\{P^y_B\}$ such that
	\begin{equation}\label{eq:stat-morph-cond}
		\Tr{(\Gamma_B\circ\Phi_A)(\omega_A)\ Q^y_C}=\Tr{\Phi_A(\omega_A)\ P^y_B}\;,
	\end{equation}
	for all $y$. \qed
\end{definition}

It is easy to verify that an everywhere positive trace-preserving linear map is always a statistical morphisms for \textit{any channel}, as long as the composition between the two is well defined. Then, the natural question is whether a linear map defined as $\Gamma$ above can always be extended to become positive and trace-preserving  \textit{everywhere}, not only on the range of $\Phi$. The question was answered in the negative by Matsumoto, who gave an explicit counterexample in Ref.~\cite{matsumoto_example}.

Vice versa, one may ask whether any linear map that is positive and trace-preserving on the range of $\Phi$ is a well-defined statistical morphism of $\Phi$ or not. Also in this case, the answer is in the negative: the fact that condition~(\ref{eq:stat-morph-cond}) must hold \textit{for any} POVM (in particular, for any number of outcomes) is strictly stronger than just positivity, for which is enough if condition~(\ref{eq:stat-morph-cond}) holds \textit{only for} two-outcome POVMs.

Statistical morphisms hence lie somewhere in between linear maps that are positive and trace-preserving (PTP) everywhere, and those that are so only on the range of $\Phi$:
\begin{equation}
	\text{PTP everywhere}\underset{\centernot\impliedby}{\implies}\textrm{stat. morph. of $\Phi$}\underset{\centernot\impliedby}{\implies}\textrm{PTP on $\operatorname{range}(\Phi)$}\;.
\end{equation}

We summarize the contents of this section in one definition and one corollary.

\begin{definition}
	Given are two quantum channels $\Phi:\op{L}(\sH_A)\to\op{L}(\sH_B)$ and $\Phi':\op{L}(\sH_A)\to\op{L}(\sH_{B'})$. For a given set $\set{U}$, we say that $\Phi$ is $\set{U}$-\textit{sufficient} for $\Phi'$, in formula,
	\begin{equation}
		\Phi\succeq_\set{U}\Phi'\;,
	\end{equation}
	 if and only if either of the conditions in Lemma~\ref{lem:fundamental} hold.\qed
\end{definition}

\begin{corollary}\label{coro:q-stat-morph}
	\begin{svgraybox}
	Given are two quantum channels $\Phi:\op{L}(\sH_A)\to\op{L}(\sH_B)$ and $\Phi':\op{L}(\sH_A)\to\op{L}(\sH_{B'})$. The following are equivalent:
	\begin{enumerate}[label=\roman*)]
		\item $\Phi\succeq_\set{U}\Phi'$, for any set $\set{U}\;$;
		\item there exists a quantum statistical morphism $\Gamma:\op{L}(\sH_B)\to\op{L}(\sH_{B'})$ of $\Phi$ such that $\Phi'=\Gamma\circ\Phi\;$.
	\end{enumerate}
	\end{svgraybox}
\end{corollary}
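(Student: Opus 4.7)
The plan is to prove the two directions separately, both by direct invocation of Lemma~\ref{lem:fundamental}, the non-trivial direction additionally requiring an explicit construction of $\Gamma$ along the lines already sketched in the exposition preceding the statement.

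For (ii)$\Rightarrow$(i), I would suppose that a statistical morphism $\Gamma$ of $\Phi$ exists with $\Phi'=\Gamma\circ\Phi$. Fixing any set $\set{U}$ and any POVM $\{Q^u_{B'}\}_{u\in\set{U}}$, the defining property~(\ref{eq:stat-morph-cond}) of a statistical morphism supplies a POVM $\{P^u_B\}$ with
\[
\Tr{(\Gamma\circ\Phi_A)(\omega_A)\ Q^u_{B'}}=\Tr{\Phi_A(\omega_A)\ P^u_B}
\]
for all $u$ and all $\omega_A$. Since $\Gamma\circ\Phi=\Phi'$, this is exactly condition~(\ref{eq:quantum-iv}) of Lemma~\ref{lem:fundamental}, so $\Phi\succeq_\set{U}\Phi'$; because $\set{U}$ was arbitrary, (i) follows.

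For the constructive direction (i)$\Rightarrow$(ii), I would pick a set $\set{U}$ of cardinality $(\dim\sH_{B'})^2$ together with an informationally complete POVM $\{Q^u_{B'}\}$ on $\sH_{B'}$ and its associated dual operator basis $\{\Xi^u_{B'}\}$, so that $X_{B'}=\sum_u\Tr{Q^u_{B'}\ X_{B'}}\ \Xi^u_{B'}$ for every $X_{B'}\in\op{L}(\sH_{B'})$. The hypothesis $\Phi\succeq_\set{U}\Phi'$, via Lemma~\ref{lem:fundamental} in the form of Remark~\ref{rem:stat-morph}, yields a POVM $\{P^u_B\}$ satisfying $\Phi^\dag(P^u_B)=(\Phi')^\dag(Q^u_{B'})$. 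The candidate map is then
\[
\Gamma(Y)\defeq\sum_u\Tr{P^u_B\ Y}\ \Xi^u_{B'},
\]
and a direct computation using the dual reconstruction identity gives $(\Gamma\circ\Phi)(\omega_A)=\sum_u\Tr{Q^u_{B'}\ \Phi'(\omega_A)}\ \Xi^u_{B'}=\Phi'(\omega_A)$, so $\Phi'=\Gamma\circ\Phi$.

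The step requiring the most care—and what I would flag as the main subtlety—is the final verification that this $\Gamma$ is not merely a linear map intertwining $\Phi$ and $\Phi'$, but is actually a statistical morphism of $\Phi$, i.e.\ that~(\ref{eq:stat-morph-cond}) holds for \emph{every} POVM $\{M^y_{B'}\}_{y\in\set{Y}}$ and not only for the informationally complete POVM used to construct $\Gamma$. Having established $\Gamma\circ\Phi=\Phi'$, the requirement $\Tr{\Phi'(\omega_A)\ M^y_{B'}}=\Tr{\Phi(\omega_A)\ N^y_B}$ is yet another instance of condition~(\ref{eq:quantum-iv}) of Lemma~\ref{lem:fundamental}, applied now with the index set $\set{Y}$ of the new POVM. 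This is precisely why the hypothesis demands $\Phi\succeq_\set{U}\Phi'$ for \emph{all} $\set{U}$ rather than a single one: a second invocation of the fundamental lemma then delivers the required $\{N^y_B\}$ and closes the argument.
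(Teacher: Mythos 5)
Your proposal is correct and takes essentially the same route as the paper: the easy direction is just the definition of a statistical morphism read as condition~(\ref{eq:quantum-iv}) of Lemma~\ref{lem:fundamental}, and the converse constructs $\Gamma$ from an informationally complete POVM $\{Q^u_{B'}\}$ with $|\set{U}|=(\dim\sH_{B'})^2$ and its dual frame $\{\Xi^u_{B'}\}$ via Remark~\ref{rem:stat-morph}, with the statistical-morphism property for an arbitrary POVM then obtained by invoking the hypothesis $\Phi\succeq_\set{U}\Phi'$ for every set $\set{U}$. The only difference is that you make explicit the final verification that $\Gamma$ satisfies~(\ref{eq:stat-morph-cond}) for all POVMs, a step the paper leaves implicit in the discussion preceding the corollary.
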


\begin{remark}\label{rem:analogy}
	Using the correspondence between ensembles and bipartite states, together with the relation between guessing probability and conditional min-entropy, given in Appendix~1 in Eqs.~(\ref{eq:cq-state}) and~(\ref{eq:quantum-hmin}), we notice that the condition $\Phi\succeq_\set{U}\Phi'$ can be equivalently written as
	\begin{equation}
		\hmin(U|B)\le\hmin(U|B'),
	\end{equation}
	where the entropies are computed with respect to states $(\id_U\otimes\Phi_A)(\omega_{UA})$ and $(\id_U\otimes\Phi'_A)(\omega_{UA})$, respectively. This is equivalent to the formulation used in Theorem~\ref{theo:classical-reverse}.\qed
\end{remark}

\section{A Semiclassical (Semiquantum) Reverse-Data Processing Theorem}
\label{sec:semiclassical}

We consider in this section the case in which the output of a quantum channel is classical, in the precise sense that the range is supported on a commutative subalgebra.

\begin{theorem}\label{theo:povms}
	\begin{svgraybox}
	Given are two quantum channels $\Phi:\op{L}(\sH_A)\to\op{L}(\sH_B)$ and $\Phi':\op{L}(\sH_A)\to\op{L}(\sH_{B'})$. Assuming that the output of $\Phi'$ is classical, i.e.,
	\begin{equation}
		[\Phi'(X),\Phi'(Y)]=0,\qquad\forall X,Y\in\op{L}(\sH_A)\;,
	\end{equation}
	the following are equivalent:
	\begin{enumerate}[label=\roman*)]
		\item $\Phi\succeq_\set{U}\Phi'$, for any set $\set{U}\;$;
		\item $\Phi\succeq_\set{U}\Phi'$, for a set $\set{U}$ such that $|\set{U}|=\dim\sH_{B'}\;$;
		\item there exists a quantum channel $\Psi:\op{L}(\sH_B)\to\op{L}(\sH_{B'})$ such that $\Phi'=\Psi\circ\Phi\;$.
	\end{enumerate}
	\end{svgraybox}
\end{theorem}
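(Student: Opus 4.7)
The implications (iii)$\implies$(i) and (i)$\implies$(ii) are easy: the first is immediate from the data-processing property of the guessing probability (composing $\Psi$ with any measurement on $B'$ yields a measurement on $B$ that attains the same guessing probability), and the second is trivial since (i) quantifies over all $\set{U}$. So the content is (ii)$\implies$(iii).

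The plan for (ii)$\implies$(iii) is to use Lemma~\ref{lem:fundamental} with a particularly adapted POVM, and then build $\Psi$ as a measure-and-prepare channel. Since the range of $\Phi'$ lies in a commutative subalgebra of $\op{L}(\sH_{B'})$, we may fix an orthonormal basis $\{\ket{u}_{B'}\}_{u\in\set{U}}$ of $\sH_{B'}$ (with $|\set{U}|=\dim\sH_{B'}$) in which every $\Phi'(X)$ is diagonal. Hence for every $\omega_A\in\op{D}(\sH_A)$,
\begin{equation}
\Phi'_A(\omega_A)=\sum_{u\in\set{U}}\Tr{\Phi'_A(\omega_A)\ \ket{u}\!\bra{u}_{B'}}\,\ket{u}\!\bra{u}_{B'}\;.
\end{equation}
Now apply Lemma~\ref{lem:fundamental} with this $\set{U}$ and the projective POVM $Q^u_{B'}\defeq\ket{u}\!\bra{u}_{B'}$: hypothesis~(ii) gives a POVM $\{P^u_B\}_{u\in\set{U}}$ on $\sH_B$ satisfying $\Tr{\Phi'_A(\omega_A)\ Q^u_{B'}}=\Tr{\Phi_A(\omega_A)\ P^u_B}$ for every $u$ and every $\omega_A$.

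Define the measure-and-prepare map
\begin{equation}
\Psi:\op{L}(\sH_B)\to\op{L}(\sH_{B'})\;,\qquad \Psi(X)\defeq\sum_{u\in\set{U}}\Tr{X\ P^u_B}\,\ket{u}\!\bra{u}_{B'}\;.
\end{equation}
Being a measurement followed by a classical preparation, $\Psi$ is manifestly completely positive, and trace preservation follows from $\sum_u P^u_B=I_B$. Combining the displayed identities, for every $\omega_A$,
\begin{equation}
\Psi(\Phi_A(\omega_A))=\sum_u\Tr{\Phi_A(\omega_A)\ P^u_B}\,\ket{u}\!\bra{u}_{B'}=\sum_u\Tr{\Phi'_A(\omega_A)\ \ket{u}\!\bra{u}_{B'}}\,\ket{u}\!\bra{u}_{B'}=\Phi'_A(\omega_A)\;,
\end{equation}
which establishes $\Phi'=\Psi\circ\Phi$.

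The only non-obvious step is selecting the right POVM to feed into Lemma~\ref{lem:fundamental}. The commutativity assumption on the range of $\Phi'$ is exactly what allows a \emph{single} rank-one projective POVM to be simultaneously (i) informationally complete on $\operatorname{range}(\Phi')$, so that the equalities produced by the lemma already determine $\Phi'_A(\omega_A)$ completely, and (ii) small enough (only $\dim\sH_{B'}$ outcomes) that $|\set{U}|=\dim\sH_{B'}$ suffices, giving (ii)$\implies$(iii) without ever needing (i). Once this POVM is in hand, the measure-and-prepare construction automatically produces a bona fide CPTP $\Psi$, bypassing the pathologies of general quantum statistical morphisms discussed after Corollary~\ref{coro:q-stat-morph}.
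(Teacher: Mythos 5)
Your proof is correct and follows essentially the same route as the paper's: you invoke Lemma~\ref{lem:fundamental} with the rank-one projective POVM $\{\flag{u}_{B'}\}$ in the common eigenbasis furnished by the commutativity of $\operatorname{range}(\Phi')$, and then assemble the measure-and-prepare channel $\Psi(\cdot)=\sum_u\Tr{\cdot\ P^u_B}\flag{u}_{B'}$, exactly as in the paper. The only cosmetic difference is that you spell out explicitly why a single $\dim\sH_{B'}$-outcome POVM suffices, which the paper leaves implicit.
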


\begin{proof}
	Since the implications (iii)$\implies$(i)$\implies$(ii) are either trivial of direct consequence of the data-processing inequality for the guessing probability, we only prove the implication (ii)$\implies$(iii).

	 Since $|\set{U}|=\dim\sH_{B'}$, we can use the elements $u\in\set{U}$ to label an orthonormal basis $\{\ket{u}:u\in\set{U}\}$ of $\sH_{B'}$. Assuming~(ii), we know from Lemma~\ref{lem:fundamental} that~(\ref{eq:quantum-iv}) holds, so, in particular, we know that there exists a POVM $\{P^u_B\}$ such that
	 \begin{equation}\label{eq:qc-equiv}
		 \Tr{\Phi'_A(\omega_A)\ \flag{u}_{B'}}=\Tr{\Phi(\omega_A)\ P^u_B}\;,
	 \end{equation}
	 for all $u$ and all $\omega_A\in\op{D}(\sH_A)$.
	 
	 We now use the fact that the output of $\Phi'$ is classical and assume that any operator in the range of $\Phi'$ can be diagonalized on the basis $\{\ket{u} \}$. This means that
	 \begin{equation}
		 \Phi'_A(\cdot)=\sum_{u\in\set{U}}\Tr{\Phi'_A(\cdot)\ \flag{u}_{B'}}\flag{u}_{B'}\;.
	 \end{equation}
	 Using Eq.~(\ref{eq:qc-equiv}), and defining a measure-and-prepare channel $\Psi:\op{L}(\sH_B)\to\op{L}(\sH_{B'})$ by the relation
	 \begin{equation}
		 \Psi(\cdot)\defeq\sum_u\Tr{\cdot\ P^u_B }\flag{u}_{B'}\;,
	 \end{equation}
	 we finally have that $\Phi'=\Psi\circ\Phi$.\qed
\end{proof}

\begin{remark}\label{rem:analogy2}
	In order to highlight the perfect analogy with Theorem~\ref{theo:classical-reverse}, we recall that the relation between guessing probability and conditional min-entropy (see Appendix~1) allows us to rewrite points~(i) and~(ii) of Theorem~\ref{theo:povms} as:
	\begin{equation}
		\hmin(U|B)\le\hmin(U|B')\;.
	\end{equation}
	See also Remark~\ref{rem:analogy} above. \qed
\end{remark}

\begin{remark}
	It is possible to show that Theorem~\ref{theo:classical-reverse} becomes a corollary of Theorem~\ref{theo:povms}. Consider in fact the situation in which both $\Phi$ and $\Phi'$ are \textit{classical-quantum channels}, namely, $\Phi:\set{X}\to\op{L}(\sH_B)$ and $\Phi':\set{X}\to\op{L}(\sH_{B'})$, with $\Phi(x)\defeq\rho^x_B\in\op{D}(\sH_B)$ and $\Phi'(x)\defeq\sigma^x_{B'}\in\op{D}(\sH_{B'})$. Assume moreover that $[\rho^x,\rho^{x'}]=0$ and $[\sigma^x,\sigma^{x'}]=0$, for all $x,x'\in\set{X}$. We are hence in a scenario much more restricted than that of Theorem~\ref{theo:povms}: in fact, by identifying commuting states with the probability distributions of their eigenvalues, we recover the classical framework and the statement of Theorem~\ref{theo:classical-reverse}.\qed
\end{remark}

\begin{remark}
	Theorem~\ref{theo:classical-reverse}, the classical reverse data-processing inequality, has thus two different proofs: one using the minimax theorem and another using the separation theorem for convex sets. Despite the fact that minimax theorem and separation theorem are ultimately equivalent~\cite{minimax}, the minimax theorem allows for an easier treatment of the approximate case, which is a very relevant point but goes beyond the scope of the present contribution. The interested reader may refer to Refs.~\cite{buscemi-prob-inf-trans,jencova-approx}. \qed
\end{remark}

\section{A Fully Quantum Reverse Data-Processing Theorem}
\label{sec:quantum}

We consider in this section the case of two completely general quantum channels, with the only restriction that the input space is the same for both.

\begin{theorem}\label{theo:quantum}
	\begin{svgraybox}
		Given are two quantum channels, $\Phi:\op{L}(\sH_A)\to\op{L}(\sH_B)$ and $\Phi':\op{L}(\sH_A)\to\op{L}(\sH_{B'})$, and an auxiliary Hilbert space $\sH_{B''}\cong\sH_{B'}$. The following are equivalent:
		\begin{enumerate}[label=\roman*)]
			\item $\id_{B''}\otimes\Phi_A\succeq_\set{U}\id_{B''}\otimes\Phi'_A$, for any set $\set{U}\;$;
			\item $\id_{B''}\otimes\Phi_A\succeq_\set{U}\id_{B''}\otimes\Phi'_A$, for a set $\set{U}$ such that $|\set{U}|=\dim(\sH_{B''}\otimes\sH_{B'})=(\dim\sH_{B'})^2\;$;
			\item there exists a quantum channel $\Psi:\op{L}(\sH_B)\to\op{L}(\sH_{B'})$ such that $\Phi'=\Psi\circ\Phi\;$.
		\end{enumerate}
	\end{svgraybox}
\end{theorem}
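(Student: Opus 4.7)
The implications (iii)\,$\Longrightarrow$\,(i) and (i)\,$\Longrightarrow$\,(ii) are routine: the first follows by writing $\id_{B''} \otimes (\Psi \circ \Phi) = (\id_{B''} \otimes \Psi) \circ (\id_{B''} \otimes \Phi)$ and applying the data-processing inequality for the guessing probability (the easy half of Lemma~\ref{lem:fundamental}), while the second is a mere specialization of the set $\set{U}$. Hence the entire content of the theorem lies in the implication (ii)\,$\Longrightarrow$\,(iii).

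My plan for (ii)\,$\Longrightarrow$\,(iii) is to apply Lemma~\ref{lem:fundamental} to the extended channels $\id_{B''} \otimes \Phi_A$ and $\id_{B''} \otimes \Phi'_A$, taking $|\set{U}| = d^2 = \dim(\sH_{B''} \otimes \sH_{B'})$ and choosing the POVM $\{Q^u_{B''B'}\}$ to be a rank-one projective measurement in an orthonormal basis of maximally entangled states of $\sH_{B''} \otimes \sH_{B'}$. Such a Bell basis exists precisely because the ancilla dimension is matched, $\sH_{B''} \cong \sH_{B'}$. Lemma~\ref{lem:fundamental} will then deliver a POVM $\{P^u_{B''B}\}$ on $\sH_{B''} \otimes \sH_B$ satisfying
\begin{equation*}
\Tr{(\id_{B''} \otimes \Phi'_A)(\omega_{B''A})\, Q^u_{B''B'}} = \Tr{(\id_{B''} \otimes \Phi_A)(\omega_{B''A})\, P^u_{B''B}}
\end{equation*}
for every $u$ and every input state $\omega_{B''A}$.

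The construction of the candidate channel $\Psi:\op{L}(\sH_B)\to\op{L}(\sH_{B'})$ will then proceed by a teleportation-style identification: since the Bell measurement on $B''B'$ implements quantum teleportation up to Weyl corrections indexed by $u$, the POVM $\{P^u_{B''B}\}$ encodes the Choi operator of a map $\Psi$ in a canonical way. I would then verify, by substituting suitable product and entangled test states $\omega_{B''A}$ back into the displayed equality, that $\Phi' = \Psi \circ \Phi$, that $\Psi$ is trace-preserving, and, most importantly, that $\Psi$ is completely positive.

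The last point is where I expect the main obstacle. What Lemma~\ref{lem:fundamental} and Corollary~\ref{coro:q-stat-morph} supply directly is only a quantum statistical morphism, which in general is merely positive on the range of $\Phi$; this is not enough for (iii), which demands a genuine CPTP extension to all of $\op{L}(\sH_B)$. The matched-dimension ancilla $B''$ is precisely what closes this gap: the Bell-basis test on $B''B'$ plays the dual role of being informationally complete via teleportation and of providing a maximally entangled probe that promotes positivity to complete positivity, analogously to how complete positivity is detected by the Choi--Jamio\l{}kowski criterion.
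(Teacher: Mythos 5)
Your overall strategy---apply Lemma~\ref{lem:fundamental} to the extended channels with $|\set{U}|=(\dim\sH_{B'})^2$, take the $Q^u_{B''B'}$ to be Bell projectors, and recover $\Psi$ by a teleportation identification---is the same route the paper follows, and you handle the easy implications the same way. But the two steps you leave as ``plans'' are exactly where the work lies, and as written there is a genuine gap. Lemma~\ref{lem:fundamental} only hands you the scalar identities $\Tr{(\id_{B''}\otimes\Phi'_A)(\omega_{B''A})\,\beta^u_{B''B'}}=\Tr{(\id_{B''}\otimes\Phi_A)(\omega_{B''A})\,P^u_{B''B}}$, one for each input state; to invoke teleportation you need an \emph{operator-valued} identity for the unnormalized conditional states left on a reference system after the Bell measurement. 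The paper obtains this by introducing a third copy $\sH_{B'''}\cong\sH_{B''}\cong\sH_{B'}$, probing the $B''$ slot with the complete family $\xi^x_{B''}=\PTr{B'''}{\phi^+_{B'''B''}\ (\Xi^x_{B'''}\otimes I_{B''})}$, and thereby upgrading the trace equalities to the operator equality~(\ref{eq:inter-to-prove}). Your phrase ``substituting suitable product and entangled test states'' gestures at this but does not carry it out, and without it the ``teleportation-style identification'' is unjustified: measurement statistics alone do not determine the post-measurement reconstruction you need.

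Second, the ``main obstacle'' you anticipate---verifying complete positivity of $\Psi$---dissolves once the construction is made explicit, which indicates the closing argument is not actually in hand. With~(\ref{eq:inter-to-prove}) specialized to $Q^u_{B''B'}=\beta^u_{B''B'}$, the teleportation identity gives $\Phi'_A(\cdot)=\sum_u W^u_{B'''}\left\{\PTr{B''B}{(\phi^+_{B'''B''}\otimes\Phi_A(\cdot))\ (I_{B'''}\otimes P^u_{B''B})}\right\}(W^u_{B'''})^\dag$, so one defines $\Psi$ by the same expression with $\Phi_A(\cdot)$ replaced by a generic argument on $\sH_B$. This $\Psi$ is \emph{manifestly} completely positive (tensoring with the state $\phi^+_{B'''B''}$, sandwiching with the positive operators $P^u_{B''B}$ and partial-tracing, conjugating by isometries, summing over $u$) and trace-preserving because $\sum_u P^u_{B''B}=I_{B''B}$. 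So complete positivity is not something to be ``promoted'' from positivity by the maximally entangled probe, and Corollary~\ref{coro:q-stat-morph} plays no role here; it is automatic from the explicit form of $\Psi$. Your proposal has the right architecture, but until you supply the operator identity (via the extra system $B'''$) and the explicit formula for $\Psi$, the decisive steps of the proof are missing.
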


\begin{remark}
	In terms of the conditional min-entropy, points~(i) and~(ii) above can be written as
	\begin{equation}
		\hmin(U|B''B)\le\hmin(U|B''B')\;,
	\end{equation}
	with obvious meaning of symbols. See also Remarks~\ref{rem:analogy} and~\ref{rem:analogy2} above. \qed
\end{remark}

\begin{proof}
	Since the implication (iii)$\implies$(i)$\implies$(ii) is straightforward, we prove here only that (ii)$\implies$(iii).
	
	Let $\sH_{B'''}$ be a futher auxiliary Hilbert space such that $\sH_{B'''}\cong\sH_{B''}\cong\sH_{B'}$. We begin by showing that, if $\id_{B''}\otimes\Phi_A\succeq_\set{U}\id_{B''}\otimes\Phi'_A$, then, for any POVM $\{Q^u_{B''B'}\}\;$, there exists a POVM $\{P^u_{B''B}\}$ such that
	\begin{equation}\label{eq:inter-to-prove}
	\begin{split}
		&\PTr{B''B'}{(\phi^+_{B'''B''}\otimes\Phi'_A(\cdot))\ (I_{B'''}\otimes Q^u_{B''B'})}\\
		&=\PTr{B''B}{(\phi^+_{B'''B''}\otimes\Phi_A(\cdot))\ (I_{B'''}\otimes P^u_{B''B})}\;,
	\end{split}
	\end{equation}
	where $\phi^+_{B'''B''}$ is a maximally entangled state in $\sH_{B'''}\otimes\sH_{B''}$. In fact, Lemma~\ref{lem:fundamental} states that, for any POVM $\{Q^u_{B''B'}\}$, there exists a POVM $\{P^u_{B''B}\}$ such that
	\begin{align}
		\Tr{(\id_{B''}\otimes\Phi'_A)(\cdot_{B''A})\ Q^u_{B''B'} }=\Tr{(\id_{B''}\otimes\Phi_A)(\cdot_{B''A})\ P^u_{B''B} }\;,
	\end{align}
	for all $u\in\set{U}$. In particular, for any family of states $\{\xi^x_{B''} \}_x$ on $\sH_{B''}$, we have
	\begin{equation}\label{eq:interm1}
		\Tr{(\id_{B''}\otimes\Phi'_A)(\xi^x_{B''}\otimes\cdot_A)\ Q^u_{B''B'} }=\Tr{(\id_{B''}\otimes\Phi_A)(\xi^x_{B''}\otimes\cdot_A)\ P^u_{B''B} }\;,
	\end{equation}
	for all $u$ and all $x$. Let us choose $\xi^x_{B''}=\PTr{B'''}{\phi^+_{B'''B''}\ (\Xi^x_{B'''}\otimes I_{B''})}$ for some complete set of positive operators $\{\Xi^x_{B'''} \}_x$. Hence Eq.~(\ref{eq:interm1}) becomes
	\begin{align}
		&\Tr{(\id_{B'''}\otimes\id_{B''}\otimes\Phi'_A)(\phi^+_{B'''B''}\otimes\cdot_A)\ (\Xi^x_{B'''}\otimes Q^u_{B''B'}) }\\
		&=\Tr{(\id_{B'''}\otimes\id_{B''}\otimes\Phi_A)(\phi^+_{B'''B''}\otimes\cdot_A)\ (\Xi^x_{B'''}\otimes P^u_{B''B}) }\;,
	\end{align}
	for all $u$ and all $x$. But since the family $\{\Xi^x_{B'''}\}_x$ has been chosen to be complete, the above equality implies the equality of the operators in Eq.~(\ref{eq:inter-to-prove}).
	
	Now, we can use generalized teleportation and show that
	\begin{align}
		\Phi'_A(\cdot)=\sum_uW^u_{B'''}\left\{\PTr{B''B'}{(\phi^+_{B'''B''}\otimes\Phi'_A(\cdot))\  (I_{B'''}\otimes\beta^u_{B''B'})  }\right\}(W^u_{B'''})^\dag\;,
	\end{align}
	where $\{\beta^u_{B''B'''}:u\in\set{U} \}$ are the $(\dim\sH_{B'})^2$ projectors onto the Bell states, and $\{W^u_{B'''}:u\in\set{U}\}$ are suitable isometries from $\sH_{B'''}$ to $\sH_{B'}$. But then, using Eq.~(\ref{eq:inter-to-prove}) with $Q^u_{B''B'}=\beta^u_{B''B'}$, we obtain
%
	\begin{align}
		\Phi'_A(\cdot)=\sum_uW^u_{B'''}\left\{\PTr{B''B}{(\phi^+_{B'''B''}\otimes\Phi_A(\cdot))\  (I_{B'''}\otimes P^u_{B''B})  }\right\}(W^u_{B'''})^\dag\;.
	\end{align}
	Hence, defining a quantum channel $\Psi:\op{L}(\sH_B)\to\op{L}(\sH_{B'})$ as
	\begin{equation}
		\Psi(\cdot)\defeq\sum_uW^u_{B'''}\left\{\PTr{B''B}{(\phi^+_{B'''B''}\otimes\cdot)\  (I_{B'''}\otimes P^u_{B''B})  }\right\}(W^u_{B'''})^\dag\;,
	\end{equation}
	we finally have that $\Phi'=\Psi\circ\Phi$, as claimed.\qed
\end{proof}

\begin{remark}
	Theorem~\ref{theo:quantum} holds also if the identity channel $\id_{B''}$ is replaced by a \textit{complete} channel, namely, a channel $\Upsilon:\op{L}(\sH_{B''})\to\op{L}(\sH_{B''})$ that is bijective (in the sense of the linear map): linearly independent inputs are transformed into linearly independent outputs. This is so because linearly independent states $\xi^x_{B''}$ in Eq.~(\ref{eq:interm1}) remain linearly independent after the action of $\Upsilon$. In this way, the proof can continue along the same lines.
	
	We notice, in particular, that a channel can be complete despite being entanglement breaking or measure-and-prepare. This implies that the ensembles used to probe channels $\id_{B''}\otimes\Phi_A$ and $\id_{B''}\otimes\Phi'_A$ can always be chosen, without loss of generality, to comprise separable states only.\qed
\end{remark}

\section{The Computational Second Law: An Analogy}
\label{sec:app}

The aim of this section is to construct an analogy, clarifying and somehow strengthening that given by Cover~\cite{cover_1996}, between data-processing theorems and the second law of thermodynamics. In what follows we abandon a formally rigorous language, preferring instead a generic language better suited to highlight the similarities and differences between thermodynamics and information theory.

Theorems~\ref{theo:classical-reverse},~\ref{theo:povms}, and~\ref{theo:quantum}, a part from the formal complications necessary to describe classical and quantum systems together, have all the same simple interpretation that we summarize in two statements (A) and (B):
\[
\begin{minipage}{0.75\textwidth}
	\textbf{Direct statement}: the information that the signal carries about the message (\textit{any} message) cannot increase along a Markov local process;
\end{minipage}
\tag{A}
\]
and
\[
\begin{minipage}{0.75\textwidth}
\textbf{Reverse statement}: if the information that the signal carries \textit{never} increases along a given process, then such a process admits a Markov local realization.
\end{minipage}
\tag{B}
\]
The direct statement corresponds to Cover's law, as formulated in~\cite{cover_1996} (see the \hyperref[quote]{quotation} at the beginning of this paper). Here ``useful information'' is precisely the information that the signal carries about the message, and it is measured by the conditional min-entropy, which is directly related to the guessing probability. The reverse statement, which is consequence of the reverse data-processing theorems that we proved, corresponds to Lieb's and Yngvason's entropy principle~\cite{lieb-yngv}.

In order to make our discussion more concrete, let us consider a thermodynamical system prepared at time $t_0$ and evolving through successive times $t_1\ge t_0$ and $t_2\ge t_1$, as depicted in Fig.~\ref{fig:process} below. The second law of thermodynamics, in the formulation usually attributed to Clausius, states that the following inequality is necessarily obeyed:
\begin{equation}\label{eq:clausius}
\Delta H\ge\frac{\Delta Q}{T},
\end{equation}
where $\Delta H=H(S_2)-H(S_1)$ is the change in thermodynamical entropy of the system and $\Delta Q$ is the heat \textit{absorbed by} the system\footnote{In the precise sense that $\Delta Q$ is positive if heat is injected into the system and negative if heat is extracted from the system; see, e.g., Ref.~\cite{borgakke}}. The above equation basically says that the only way to decrease the entropy of a system is to extract heat from the system. This implies that, if a system is adiabatically isolated (i.e., no heat is exchanged, only mechanical work), then its entropy cannot decrease. Equivalently stated: a decrease in entropy represents a definite witness of the fact that the system \textit{is not} adiabatically isolated and is dumping heat in the environment.

\begin{figure}[b]
	\centering
	\includegraphics[width=0.7\linewidth]{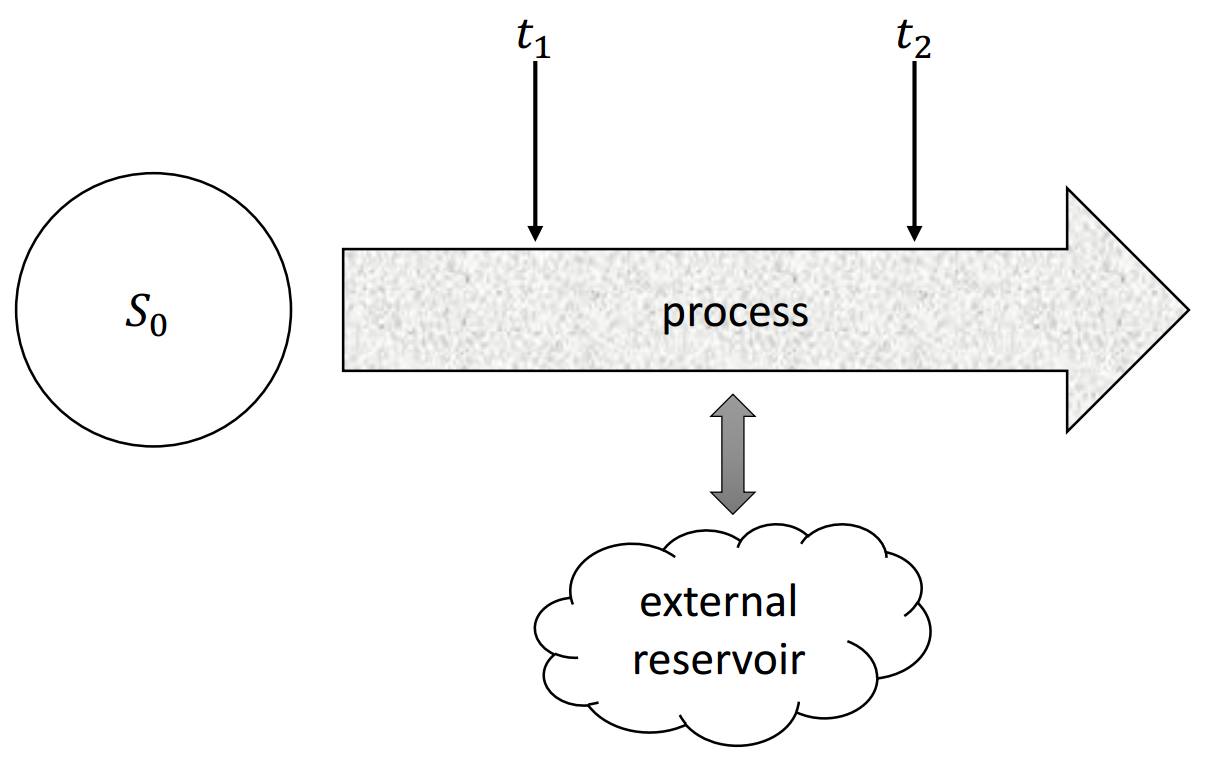}
	\caption{Suppose that a system, prepared at time $t_0$, undergoes a process, and that we observe it at two later times $t_1\ge t_0$ and $t_2\ge t_1$. \textbf{Thermodynamical case}: Clausius' principle and Lieb's and Yngvason's entropy principle state that $\Delta H=H(S_2)-H(S_1)\ge 0$ if and only if the process bringing the system from $t_1$ to $t_2$ can be realized adiabatically (i.e., exchanging only work and no heat). This is equivalent to say that: (i) a decrease in entropy can only be achieved by exchanging heat with an external reservoir; (ii) if the process cannot be realized adiabatically then there is some initial configuration $S_0$ for which a decrease in entropy occurs. \textbf{Information-theoretic case}: the data-processing inequality and the reverse data-processing theorems state that $\Delta \hmin=\hmin(U|S_2)-\hmin(U|S_1)\ge 0$ for all $U$, if and only if the process bringing the system from $t_1$ to $t_2$ is Markov local (i.e., there exists a memoryless channel $\Psi$ such that $S_2=\Psi(S_1)$). This is equivalent to say that: (i) a decrease in the conditional min-entropy can only be achieved in the presence of an external memory storing information about the message and feeding it back into the system at later times; (ii) if the process is not Markov local, then there exists some initial message-signal joint distribution for which a decrease of $\hmin$ occurs.}
	\label{fig:process}
\end{figure}

This part of the second law can be seen as the analogue of statement~(A) above, that is, the usual data-processing inequality. Suppose now that the system $S$ is an information signal. As before, it is prepared at time $t_0$ and then it undergoes a process that is information-theoretic, rather than thermodynamical. If we observe the signal at two times $t_1\ge t_0$ and $t_2\ge t_1$, then we know that if the process is Markov local, then the data-processing inequality holds, namely, the information carried by the signal cannot increase going from $t_1$ to $t_2$. Therefore, any increase in the information carried by the signal is a definite witness of the fact that the process \textit{is not} Markov local, namely, that \textit{an external memory was used as a side resource at some point along the process}.

We now come to the reverse statement~(B), arguing that it is the analogue of Lieb's and Yngvason's entropy principle. The latter states that, assuming the validity of a set of axioms about simple thermodynamical systems\footnote{The most important and debated of which is the \textit{comparability hypothesis}: the interested reader may refer to Uffink~\cite{uffink}.}, a non-decreasing entropy between $t_1$ and $t_2$ is not only necessary (Clausius' principle) but also \textit{sufficient} for the existence of an adiabatic process between the two times. It is clear that the analogy works in this case too: \textit{the reverse data-processing theorems we proved constitute the information-theoretic analogue of Lieb's and Yngvason's entropy principle}. An overview of the analogy is summarized in the table below.
\begin{table}
	\label{tab:1}
	\begin{tabular}{p{0.5\textwidth}p{0.5\textwidth}}
		\hline\noalign{\smallskip}
		\textbf{Thermodynamical setting} & \textbf{Information-theoretic setting} \\
		\noalign{\smallskip}\svhline\noalign{\smallskip}
		thermodynamical system $S$ & message $U$ encoded on signal $S$ \\
		entropy $H(S)$ & conditional min-entropy $\hmin(U|S)$  \\
		Clausius' principle & data-processing inequality  \\
		Lieb--Yngvason entropy principle & reverse data-processing theorem\\
		adiabatically isolated system & computationally isolated system \\
		adiabatic process & Markov local (memoryless) process \\
		heat sink/reservoir & external memory\\
		\noalign{\smallskip}\hline\noalign{\smallskip}
	\end{tabular}
	\caption{Summary of the analogies between the second law of thermodynamics and its computational analogue discussed here.}
\end{table}

Despite the tantalizing analogies, there are however two points (at least) that we should keep in mind before jumping to rash conclusions. The first one is that, while in thermodynamics a process is usually given by an initial state and a final state, in information theory a process is a channel, which acts upon \textit{any input} it receives.

The second point is that the relation presented here between adiabaticity and Markov locality (or memorylessness) has been discussed only on a formal level, but no claim has been made about any \textit{quantitative} relation between the two concepts. However, we would like to conclude this paper speculating about the possibility to envisage a deeper relation between adiabaticity and Markov locality, going beyond the formal analogy presented above. An adiabatically isolated system cannot exchange heat, but can interact with a mechanical device and exchange work with it. Since it is possible to imagine a purely mechanical memory (at least of finite size), it seems that the presence of a memory, in itself, should not violate adiabaticity. But then, a scenario similar to that of Maxwell's demon immediately comes to mind. Indeed, Maxwell's demon violates the second law using nothing but its memory: its actions, included the measurements it performs, are assumed to be otherwise perfectly adiabatic\footnote{Thus the demon can be imagined as a ``perfect clockwork.''}. Hence, it seems that adiabaticity does not play well with the presence of an external memory, even if this is taken to be perfectly mechanical. This fact suggests that adiabaticity and Markov locality may be even closer than what the analogies in Table~1 \textit{prima facie} seem to suggest. This and other questions are left open for future investigations.

\begin{acknowledgement}
	It is a pleasure to thank, in alphabetical order, Ettore Bernardi, Jeremy Butterfield, Weien Chen, Giulio Chiribella, Giacomo Mauro D'Ariano, Nilanjana Datta, G\'abor Hofer-Szab\'o, Koji Maruyama, Keiji Matsumoto, Mil\'an Mosonyi, Masanao Ozawa, Veiko Palge, Paolo Perinotti, David Reeb, and Mark Wilde, whose comments helped in shaping the present ideas at various stages during the past few years.
\end{acknowledgement}

\section*{Appendix 1: Definitions and Notations}
\addcontentsline{toc}{section}{Appendix 1: Definitions and Notations}

Here we review some basic notions and clarify the notation that is used in the paper. The reader familiar with the standard toolbox used in quantum information theory (see, e.g., Ref.~\cite{wilde-qit}) can safely skip to the next section.

All set and spaces considered here are finite or finite dimensional. We denote sets as $\set{X},\set{Y},\set{Z},\set{U},\dots$ and their elements as $x,y,z,u,\dots\;$. Sets support probability distributions, for example, $p(x)$. When we speak of a random variable, for example, $X$, we mean that it is supported by the set $\set{X}$, in the sense that its states are labeled by $x\in\set{X}$, and that each state can occur with probability $p(x)=\operatorname{Pr}\{X=x\}$. When a pair (or a triple etc) of random variables are considered, we write $(X,Y)$ to mean a bipartite random variable supported on the cartesian product $\set{X}\times\set{Y}=\{(x,y):x\in\set{X},y\in\set{Y}\}$ and distributed with joint probability $p(x,y)$. Classical noisy channels are represented by conditional input--output probability distributions $w(y|x)$: in this case we understand that the channel $w$ has input alphabet $\set{X}$ and output alphabet $\set{Y}$ and write $w:\set{X}\to\set{Y}$.

Quantum systems are labeled by $A,B,C,\dots$ and their corresponding finite dimensional Hilbert spaces are denoted as $\sH_A,\sH_B,\sH_C,\dots\;$. The set of linear operators on a Hilbert space $\sH$ is denoted as $\op{L}(\sH)$, the set of Hermitian operators as $\op{L}_H(\sH)$, the set of positive semidefinite operators as $\op{L}_+(\sH)$, and the set of density operators (or states), i.e., positive semidefinite with unit trace, as $\op{D}(\sH)$. Vectors are denoted as kets $\ket{\phi}$, while if we write $\phi$ we mean the corresponding state, that is, the projector $\flag{\phi}$. Given an orthonormal basis $\{\ket{x}:x\in\set{X}\}$ for a Hilbert space, we sometime call the set of orthogonal projectors $\flag{x}$ ``flags,'' since these can be used to model a classical random variable with distinguishable states. For example, given a random variable $X$ with states $x\in\set{X}$ and distribution $p(x)$, we will often think of it as ``embedded'' in a Hilbert space $\sH_X$, with $\dim\sH_X=|\set{X}|$, and described by the state $\sum_xp(x)\flag{x}$. This is a convention commonly used in quantum information theory as it significantly simplifies the analysis of hybrid classical-quantum scenarios.

A family $\{P^x_A:x\in\set{X}\}$ of operators $P^x_A\in\op{L}_+(\sH_A)$ such that $\sum_xP^x_A=I_A$ is called a POVM on $\sH_A$. An ensemble is given by giving a set $\set{X}$, a probability distribution $p(x)$ and a family of states $\rho^x_A\in\op{L}(\sH_A)$: we denote it for brevity as $\{p(x);\rho^x_A \}$, where the set $\set{X}$ is usually understood from the context. Extending the idea mentioned in the preceding paragraph of embedding classical random variables in orthogonal states of a suitable Hilbert space, it is also common to interpret an ensemble as a bipartite state as follows:
\begin{equation}\label{eq:cq-state}
	\{p(x);\rho^x_A \}_{x\in\set{X}}\quad \Longleftrightarrow\quad\rho_{XA}\defeq\sum_{x\in\set{X}}p(x)\flag{x}_X\otimes\rho^x_A\;.
\end{equation}

A linear map $\Phi:\op{L}(\sH_A)\to\op{L}(\sH_B)$ is said to be a \textit{quantum channel} if and only if it is completely positive and trace-preserving. Given a linear map $\Phi:\op{L}(\sH_A)\to\op{L}(\sH_B)$, its \textit{trace-dual} $\Phi^\dag:\op{L}(\sH_B)\to\op{L}(\sH_A)$ is the linear map defined by the relation
\begin{equation}\label{eq:trace-dual}
	 \Tr{X\ \Phi^\dagger(Y)}\defeq\Tr{\Phi(X)\ Y}\;,
\end{equation}
for all $X\in\op{L}(\sH_A)$ and all $Y\in\op{L}(\sH_B)$. $\Phi$ is a channel if and only if $\Phi^\dag$ is completely positive and unit-preserving, i.e., $\Phi^\dagger_B(I_B)=I_A$.

Given a pair of random variables $(X,U)$, the guessing probability of $U$ given $X$ is
\begin{align}
	\pguess(U|X)&\defeq\max_{\varphi}\sum_u\varphi(u|x)p(x,u)\\
	&=\sum_x\max_up(u,x)\;,
\end{align}
where the optimization is done over all channels (decoding strategies) $\varphi:\set{X}\to\set{U}$. In other words, it is the probability of correctly guessing $U$ using the ideal observer decoding strategy on $X$.

The quantum analogue of this is the problem of correctly guessing $U$ given an ensemble of quantum states $\{p(u);\rho^u_A \}\;$. In this case, the role of datum $X$ is played by the quantum system $A$ and the guessing probability is
\begin{equation}
	\pguess(U|A)\defeq\max_P\sum_u\Tr{P^u_A\ \rho^u_A}\;,
\end{equation}
where the optimization is done over all POVMs $\{P^u_A:u\in\set{U}\}$. Notice that in this paper we only consider the case of guessing a classical random variable given a quantum system, so in the expression $\pguess(U|A)$ the roles of $U$ (random variable) and $A$ (quantum system) should always be clearly understandable from the context.

\subsection*{Entropies}
\addcontentsline{toc}{subsection}{Entropies}

The letter $H$ is used to denote the entropy. More precisely, in the case of classical random variables $H(X)\defeq-\sum_xp(x)\log_2p(x)$; in the case of a quantum state $\rho_A$, $H(A)\defeq-\sum_i\lambda_i\log_2\lambda_i$, where the $\lambda$'s are the eigenvalues of $\rho_A$. Following common terminology, the entropy of a classical variable is called the Shannon entropy, while the entropy of a state is called the von Neumann entropy.

Given a pair of random variables $(X,Y)$, the conditional entropy is $H(X|Y)=H(XY)-H(Y)$ and the mutual information is $I(X;Y)=H(X)+H(Y)-H(XY)=H(X)-H(X|Y)$. Given a bipartite state $\rho_{AB}\in\op{D}(\sH_A\otimes\sH_B)$, all the definitions are extended by analogy, for example, $H(A|B)=H(AB)-H(B)$, where $H(AB)$ is the von Neumann entropy of $\rho_{AB}$ and $H(B)$ is the von Neumann entropy of the reduced state $\rho_B=\PTr{A}{\rho_{AB}}$.

von Neumann and Shannon entropies are not the only entropies that are relevant in information theory. Lately, in particular, alternative entropies have been found to play a central role in various information-theoretic scenarios. Such entropies, whose classification is beyond the scope of this work, include for example R\'enyi entropies and, in particular, min- and max-entropies, see, e.g., Ref.~\cite{tomamichel}. The one that is relevant for this work is the so-called \textit{conditional min-entropy} which is given by
\begin{equation}\label{eq:class-hmin}
	\hmin(U|X)=-\log_2\pguess(U|X)
\end{equation}
in the case of two classical random variables, and
\begin{equation}\label{eq:quantum-hmin}
	\hmin(U|A)=-\log_2\pguess(U|A)
\end{equation}
in the case of an ensembles of quantum states. In fact, $\hmin(U|A)$ is the conditional min-entropy of the classical-quantum state $\rho_{XA}$ defined in Eq.~(\ref{eq:cq-state}).

\section*{Appendix 2: The Minimax Theorem}
\addcontentsline{toc}{section}{Appendix 2: The Minimax Theorem}

Here we state a form of the Minimax Theorem as needed in the proof of Theorem~\ref{theo:classical-reverse}, see, e.g., Lemma~4.13 in Ref.~\cite{liese-miescke}:
\begin{theorem}\label{th:minimax}
	Let $\set{S}\subset\mathbb{R}^s$ be a closed convex set and $\set{L}\subset\mathbb{R}^d$ be a polytope. If $f:\set{S}\times\set{L}\to\mathbb{R}$ is continuous and satisfies
	\begin{align}
		&f\left[\alpha y_1+(1-\alpha)y_2,z \right] = \alpha f(y_1,z)+(1-\alpha)f(y_2,z) \label{eq:minimax1}\\
		&f\left[y,\alpha z_1+(1-\alpha)z_2 \right] = \alpha f(y,z_1)+(1-\alpha)f(y,z_2) \label{eq:minimax2}\;,
	\end{align}
	for all $\alpha\in[0,1]$, $y,y_1,y_2\in\set{S}$, and $z,z_1,z_2\in\set{L}$, then
	\begin{equation}
		\max_{z\in\set{L}}\min_{y\in\set{S}}f(y,z)=\min_{y\in\set{S}}\max_{z\in\set{L}}f(y,z)\;.
	\end{equation}
\end{theorem}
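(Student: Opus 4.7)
My plan is to exploit the polytope structure of $\set{L}$ to reduce the $\max$ over $\set{L}$ to a $\max$ over a finite set of vertices, and then derive the minimax equality from a separation-theorem argument in a finite-dimensional space indexed by those vertices.

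First I would observe that weak duality $\max_{z\in\set{L}}\min_{y\in\set{S}}f(y,z)\le\min_{y\in\set{S}}\max_{z\in\set{L}}f(y,z)$ is immediate from $\min_y f(y,z_0)\le f(y_0,z_0)\le \max_z f(y_0,z)$ taken over any fixed $y_0,z_0$, so the whole task is to prove the reverse inequality. Let $v$ denote the right-hand side. Since $\set{L}$ is a polytope, write $\set{L}=\operatorname{conv}\{z_1,\dots,z_k\}$. By the affinity condition~(\ref{eq:minimax2}), for each $y$ the function $z\mapsto f(y,z)$ is affine on $\set{L}$, so its maximum over $\set{L}$ is attained at a vertex; hence $\max_{z\in\set{L}}f(y,z)=\max_{i}f(y,z_i)$ and therefore $\min_{y\in\set{S}}\max_{i}f(y,z_i)=v$.

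Next I would form the ``achievable set'' in $\mathbb{R}^k$,
\begin{equation}
T\defeq\bigl\{(t_1,\dots,t_k)\in\mathbb{R}^k:\ \exists\, y\in\set{S}\ \text{with}\ t_i\ge f(y,z_i)\ \text{for all } i\bigr\}\;,
\end{equation}
and note that $T$ is convex: if $(y,y')$ witness $(t,t')\in T$, then $\lambda y+(1-\lambda)y'\in\set{S}$ witnesses $\lambda t+(1-\lambda)t'$ by convexity of $\set{S}$ and by~(\ref{eq:minimax1}). By the definition of $v$, for every $y\in\set{S}$ some coordinate $f(y,z_i)$ reaches at least $v$, so $T$ is disjoint from the open convex cone $\set{O}\defeq\{t\in\mathbb{R}^k:t_i<v\ \forall i\}$. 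Applying the separation theorem to the disjoint convex sets $T$ and $\set{O}$ (one of which is open, so no compactness of $\set{S}$ is needed) produces a nonzero $\alpha\in\mathbb{R}^k$ and $c\in\mathbb{R}$ with $\alpha\cdot t\ge c$ on $T$ and $\alpha\cdot t\le c$ on $\set{O}$. Unboundedness of $T$ in the $+$ direction along each axis forces $\alpha_i\ge 0$, and evaluating the supremum on $\set{O}$ gives $c\ge v\sum_i\alpha_i$; after normalizing so $\sum_i\alpha_i=1$, we get $c\ge v$.

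Finally I would unpack what this separation says. For every $y\in\set{S}$, the vector $(f(y,z_1),\dots,f(y,z_k))$ lies in $T$, so
\begin{equation}
\sum_i\alpha_i f(y,z_i)\;\ge\;c\;\ge\;v\;.
\end{equation}
By the affinity~(\ref{eq:minimax2}), $\sum_i\alpha_i f(y,z_i)=f\bigl(y,\sum_i\alpha_i z_i\bigr)$, and $z^\star\defeq\sum_i\alpha_i z_i$ belongs to $\set{L}$ because it is a convex combination of its vertices. Therefore $f(y,z^\star)\ge v$ for every $y\in\set{S}$, whence $\min_{y\in\set{S}}f(y,z^\star)\ge v$ and thus $\max_{z\in\set{L}}\min_{y\in\set{S}}f(y,z)\ge v$, matching the upper bound from weak duality. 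The main delicacy I anticipate is the separation step: I need to be careful about using the correct version of the separation theorem (the one for a convex set and a disjoint open convex set, i.e., Corollary~\ref{theo:sep2} in Appendix~3), and about the sign constraint $\alpha_i\ge 0$ that follows from $T$'s upward recession directions, since these determine that the separating functional is a probability vector and so defines a bona fide element $z^\star\in\set{L}$.
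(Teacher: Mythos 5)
The paper does not actually prove Theorem~\ref{th:minimax}: it states it and defers to Lemma~4.13 of Liese and Miescke, so there is no internal proof to compare against. Your argument is a correct, self-contained proof along the classical separating-hyperplane route: weak duality, reduction of the max over the polytope to its finitely many vertices via affinity in $z$, convexity of the upper set $T$ via affinity in $y$ and convexity of $\set{S}$, disjointness of $T$ from the open orthant $\set{O}$, separation, the sign and normalization of the separating functional, and reassembly of the functional into a point $z^\star\in\set{L}$. All of these steps check out (the extension of the two-point affinity hypotheses~(\ref{eq:minimax1})--(\ref{eq:minimax2}) to $k$-point convex combinations is a one-line induction, and the tacit finiteness of $v$ is already presupposed by the theorem's use of $\min$ and $\max$). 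This supplies more than the paper does, and it also exposes that closedness of $\set{S}$ and continuity of $f$ are only needed for attainment of the infima, not for the duality identity itself.

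The one genuine blemish is the citation of the separation theorem. Corollary~\ref{theo:sep2} is a statement about two \emph{closed and bounded} convex sets and characterizes \emph{inclusion} via domination of support functions; it is not the tool you need. Your step requires separating two disjoint convex sets, $T$ and $\set{O}$, of which $T$ is unbounded and $\set{O}$ is open and unbounded, so neither Theorem~\ref{theo:sep} (point versus compact convex set) nor Corollary~\ref{theo:sep2} applies as stated. The result you actually invoke --- any two disjoint nonempty convex sets in $\mathbb{R}^k$ admit a separating hyperplane, with the inequality strict on the open one --- is standard (see, e.g., Rockafellar), but you should cite it as such rather than point to Appendix~3, or else first reduce to a form the appendix covers. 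With that reference corrected, the proof stands.
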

In proving Theorem~\ref{theo:classical-reverse} we specialize the above statement to the case in which $\set{S}$ is the set of classical channels $\varphi:Y\to Z$ (indeed convex and closed) and $\set{L}$ is the set of joint probability distributions on $\set{X}\times\set{Z}$ (indeed a polytope). Last thing to check is that conditions~(\ref{eq:minimax1}) and~(\ref{eq:minimax2}) hold: this is a consequence of the fact that the function in the case considered is actually linear in both its variables.

\section*{Appendix 3: The Separation Theorem}
\addcontentsline{toc}{section}{Appendix 3: The Separation Theorem}

Here we give an elementary geometrical proof of the Hahn-Banach separation theorem in its simplest case, i.e. where the sets considered are closed and bounded. For a more general treatment the interested reader may refer to, e.g., Ref.~\cite{rockafellar}.

\begin{theorem}\label{theo:sep}
	Let $C\in\mathbb{R}^n$ be a closed and bounded convex set, and let
	$y\in\mathbb{R}^n$ be a vector that does not belong to $C$,
	i.e. $y\notin C$. Then, there exists a vector $k\in\mathbb{R}^n$ and
	a constant $\alpha\in\mathbb{R}$ such that $k\cdot x<\alpha<k\cdot
	y$, for all $x\in C$. We say that the hyperplane
	$\set{L}:=\{z\in\mathbb{R}^n:z\cdot k=\alpha\}$ separates $C$
	and $y$ strictly.
\end{theorem}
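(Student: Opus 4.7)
The plan is to use the standard closest-point (nearest-point projection) construction. Since $C$ is closed and bounded in $\mathbb{R}^n$, it is compact, so the continuous function $x\mapsto \|x-y\|^2$ attains a minimum on $C$ at some $x^*\in C$. I would take $k := y - x^*$ (nonzero because $y\notin C$) as the candidate normal to the separating hyperplane, and pick $\alpha$ strictly between $k\cdot x^*$ and $k\cdot y$. The real content is showing that all of $C$ lies on the $x^*$ side of the induced hyperplane; this is where convexity enters, and I expect it to be the only non-routine step of the argument.

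For that step, I would use a first-order variational inequality. Fix any $x\in C$; by convexity, the whole segment $x^*+t(x-x^*)$ for $t\in[0,1]$ stays in $C$, so the scalar function
\[
g(t):=\|x^*+t(x-x^*)-y\|^2=\|k\|^2-2t\,k\cdot(x-x^*)+t^2\|x-x^*\|^2
\]
is minimized over $[0,1]$ at $t=0$. The one-sided condition $g'(0)\ge 0$ then forces $k\cdot(x-x^*)\le 0$, i.e., $k\cdot x\le k\cdot x^*$ for every $x\in C$. Geometrically, this says that the ball of radius $\|k\|$ around $y$ touches $C$ only at $x^*$ and that the supporting hyperplane through $x^*$ perpendicular to $k$ keeps $C$ on one side.

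To close the argument, I would compute $k\cdot y - k\cdot x^* = \|k\|^2 > 0$, so setting $\alpha := k\cdot x^* + \tfrac{1}{2}\|k\|^2$ yields $k\cdot x\le k\cdot x^* < \alpha < k\cdot y$ for all $x\in C$, which is strict separation as claimed. Overall, the compactness step that produces $x^*$ is standard, the final choice of $\alpha$ is bookkeeping, and the only genuinely load-bearing step is the variational inequality in the middle, which is handled by differentiating $g$ at $t=0$.
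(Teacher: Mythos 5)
Your proposal is correct and follows essentially the same route as the paper: nearest-point projection onto the compact convex set, $k:=y-x^*$, the convexity-based first-order inequality along the segment (the paper performs the same computation by expanding $\|(1-p)x_0+px-y\|^2$, dividing by $p$, and letting $p\to0$, which is exactly your $g'(0)\ge0$), and the same midpoint choice of $\alpha$. No gaps.
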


\begin{proof}
	Let $x_0\in C$ be a point such that
	\begin{equation}
	\N{x_0-y}=\min_{x\in C}\N{x-y}>0.
	\end{equation}
	Its existence is guaranteed by the Weierstrass' extreme value
	theorem. The strict inequality comes from the fact that $y\notin C$,
	by assumption.
	
	Let us now define
	\begin{equation}
	k:=y-x_0
	\end{equation}
	and
	\begin{equation}
	\alpha:=\frac 12(k\cdot x_0+k\cdot y)=\frac 12(y\cdot y-x_0\cdot x_0).
	\end{equation}
	
	We note now that
	\begin{equation}
	\begin{split}
	k\cdot y&=(y-x_0)\cdot y\\
	&=\frac 12\left\{(y-x_0)\cdot y+(y-x_0)\cdot y\right\}\\
	&=\frac 12\left\{(y-x_0)\cdot(y-x_0+x_0)+(y-x_0)\cdot y\right\}\\
	&=\frac 12\left\{(y-x_0)x_0+(y-x_0)\cdot y\right\}+\frac
	12(y-x_0)\cdot(y-x_0)\\
	&>\alpha,
	\end{split}
	\end{equation}
	and that
	\begin{equation}
	\begin{split}
	k\cdot x_0&=(y-x_0)\cdot x_0\\
	&=\frac 12\left\{(y-x_0)\cdot x_0+(y-x_0)\cdot x_0\right\}\\
	&=\frac 12\left\{(y-x_0)\cdot(x_0+y-y)+(y-x_0)\cdot x_0\right\}\\
	&=\frac 12\left\{(y-x_0)x_0+(y-x_0)\cdot y\right\}+\frac
	12(y-x_0)\cdot(x_0-y)\\
	&=\frac 12\left\{(y-x_0)x_0+(y-x_0)\cdot y\right\}-\frac
	12(y-x_0)\cdot(y-x_0)\\
	&<\alpha,
	\end{split}
	\end{equation}
	
	Now, let us consider any $x\in C$. By convexity, $(1-p)x_0+px\in C$,
	for any $p\in[0,1]$. Then, 
	we have that
	\begin{equation}
	\begin{split}
	\N{x_0-y}^2&=\min_{x\in C}\N{x-y}^2\\
	&\le\N{(1-p)x_0+px-y}^2\\
	&=\N{(1-p)(x_0-y)+p(x-y)}^2\\
	&=(1-p)^2\N{x_0-y}^2+2p(1-p)(x_0-y)\cdot(x-y)+p^2\N{x-y}^2,
	\end{split}
	\end{equation}
	where we used the formula $\N{w_0+w_1}^2=
	\N{w_0}^2+\N{w_1}^2+2w_0\cdot w_1$, valid for all
	$w_0,w_1\in\mathbb{R}^n$. Therefore,
	\begin{equation}
	0\le p(p-2)\N{x_0-y}^2+2p(1-p)(x_0-y)\cdot(x-y)+p^2\N{x-y}^2.
	\end{equation}
	Let us now consider the case $p\neq 0$. Then,
	\begin{equation}
	0\le (p-2)\N{x_0-y}^2+2(1-p)(x_0-y)\cdot(x-y)+p\N{x-y}^2,
	\end{equation}
	and, taking the limit for $p\to 0$, we finally obtain
	\begin{equation}
	\begin{split}
	0&\le -2 \N{x_0-y}^2+2(x_0-y)\cdot(x-y)\\
	&\le 2\left\{(x_0-y)\cdot(x-y)-(x_0-y)\cdot(x_0-y)\right\}\\
	&=2\left\{(x_0-y)\cdot x-(x_0-y)\cdot x_0\right\}\\
	&=2\left\{-k\cdot x+k\cdot x_0\right\},
	\end{split}
	\end{equation}
	which implies that $k\cdot x\le k\cdot x_0<\alpha<k\cdot y$, for any
	$x\in C$, as claimed.\qed
\end{proof}

For our purpose the following reformulation of Theorem~\ref{theo:sep} is particularly useful:

\begin{corollary}\label{theo:sep2}
	Let $C_1$ and $C_2$ be two closed and bounded convex sets in
	$\mathbb{R}^n$. Then, $C_1\supseteq C_2$ if and only, for every
	vector $k\in\mathbb{R}^n$,
	\begin{equation}
	\max_{x\in C_1}k\cdot x\ge\max_{y\in C_2}k\cdot y.
	\end{equation}
\end{corollary}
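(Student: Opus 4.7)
The forward direction is essentially immediate: if $C_1 \supseteq C_2$, then every $y \in C_2$ also lies in $C_1$, so the maximum of the linear functional $x \mapsto k \cdot x$ over the larger set $C_1$ can only be greater than or equal to the maximum over $C_2$, for every choice of $k$. I would dispatch this direction in one or two sentences at the start of the proof.

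The content lies in the reverse direction, which I would prove by contraposition, leveraging Theorem~\ref{theo:sep}. The plan is to assume $C_1 \not\supseteq C_2$ and produce a vector $k$ violating the inequality. Since containment fails, there exists $y_0 \in C_2$ with $y_0 \notin C_1$. Because $C_1$ is closed, bounded, and convex, Theorem~\ref{theo:sep} applies to the pair $(C_1, y_0)$, yielding a vector $k \in \mathbb{R}^n$ and a scalar $\alpha \in \mathbb{R}$ with
\begin{equation}
k \cdot x < \alpha < k \cdot y_0, \qquad \forall x \in C_1.
\end{equation}
Taking the supremum over $x \in C_1$ on the left (which is attained, since $C_1$ is closed and bounded and the functional is continuous) gives $\max_{x \in C_1} k \cdot x \le \alpha$, while on the right the value $k \cdot y_0$ is attained by a specific point of $C_2$, so $k \cdot y_0 \le \max_{y \in C_2} k \cdot y$. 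Chaining these two bounds yields
\begin{equation}
\max_{x \in C_1} k \cdot x \le \alpha < k \cdot y_0 \le \max_{y \in C_2} k \cdot y,
\end{equation}
which is the desired witness that the max inequality fails for this particular $k$.

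There is no real obstacle here: both directions are short, and the only slightly delicate point is ensuring that the maxima in the statement are actually attained rather than being mere suprema. This is handled by the hypotheses that $C_1$ and $C_2$ are closed and bounded subsets of $\mathbb{R}^n$ (hence compact), together with continuity of $x \mapsto k \cdot x$, so the Weierstrass extreme value theorem guarantees the maxima exist. I would note this compactness remark explicitly so the reader sees why the strict separation from Theorem~\ref{theo:sep} translates cleanly into the weak (non-strict) inequality between maxima that appears in the statement of the corollary.
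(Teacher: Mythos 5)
Your proof is correct and is exactly the argument the paper intends: the corollary is stated as a ``reformulation'' of Theorem~\ref{theo:sep} with no written proof, and the standard derivation---trivial forward direction plus contraposition via strict separation of a point $y_0\in C_2\setminus C_1$ from $C_1$---is precisely what you give. Your remark that compactness of $C_1$ and $C_2$ guarantees the maxima are attained is a worthwhile addition that the paper leaves implicit.
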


\end{document}